\newcommand\vldbdoi{XX.XX/XXX.XX}
\newcommand\vldbpages{XXX-XXX}
\newcommand\vldbvolume{19}
\newcommand\vldbissue{1}
\newcommand\vldbyear{2026}
\newcommand\vldbauthors{\authors}
\newcommand\vldbtitle{\method{}: Enabling \XOR{} in Decimal Space for Streaming Lossless Compression of Floating-point Data} 
\newcommand\vldbavailabilityurl{https://github.com/SuDIS-ZJU/DeXOR}
\newcommand\vldbpagestyle{plain}
\useunder{\uline}{\ul}{}
\newtcolorbox{myquote}[1][]{%
    colback=black!5,
    colframe=black!5,
    notitle,
    sharp corners,
    borderline west={1pt}{0pt}{gray!80!black},
    enhanced,
    breakable,
    left=1pt,    
    right=1pt,   
    top=1pt,     
    bottom=1pt,  
    #1
}
\newtheoremstyle{custom}
  {3pt}   
  {3pt}   
  {\itshape}       
  {}       
  {\bfseries} 
  {.}      
  { }      
  {}       
\theoremstyle{custom}
\newtheorem{lemma}{\bf Lemma}
\newtheorem{definition}{\bf Definition}
\newtheorem{example}{\bf Example}
\newtheorem*{theorem*}{\bf Problem Statement}
\newcommand{\Sprintz}{\textsf{Sprintz}}
\newcommand{\Gorilla}{\textsf{Gorilla}}
\newcommand{\Chimp}{\textsf{Chimp}}
\newcommand{\Chimpp}{$\textsf{Chimp}_{128}$}
\newcommand{\Elf}{\textsf{Elf}}
\newcommand{\Elfp}{\textsf{Elf}+}
\newcommand{\ElfStar}{\textsf{Elf}$\ast$}
\newcommand{\SElfStar}{\textsf{SElf}$\ast$}
\newcommand{\PDE}{\textsf{PDE}}
\newcommand{\ALP}{\textsf{ALP}}
\newcommand{\Camel}{\textsf{Camel}}
\newcommand{\method}{\textsf{DeXOR}}
\newcommand{\XOR}{\textsc{xor}}
\newcommand{\DXOR}{\textsc{decimal xor}}
\newcommand{\problem}{{\textsc{SLC}}}
\newcommand{\zerobit}{\(\fboxsep=1pt\fbox{\texttt{0}}\)}
\newcommand{\onebit}{\(\fboxsep=1pt\fbox{\texttt{1}}\)}
\definecolor{emphbgcolor1}{HTML}{F79F9F}
\definecolor{emphbgcolor2}{HTML}{CAF4FF}
\definecolor{emphbgcolor3}{HTML}{FFD09B}
\definecolor{emphbgcolor4}{HTML}{FFC6C2}
\definecolor{emphbgcolor5}{HTML}{E0F7FA}
\definecolor{emphbgcolor6}{HTML}{FFA07A}
\definecolor{textcolorgreen}{HTML}{11AA33}
\definecolor{textcolorgray}{HTML}{993399}
\definecolor{lowdpcolor}{HTML}{DDDDFF}
\definecolor{highdpcolor}{HTML}{FFCCCC}
\newcommand{\zl}[1]{{#1}} 
\newcommand{\cy}[1]{{#1}}
\newcommand{\cyr}[1]{}
\newcommand{\lcy}[1]{{#1}}
\newcommand{\lcyr}[1]{}
\newcommand{\cyblue}[1]{{#1}}
  \providecommand\BibTeX{{%
    \normalfont B\kern-0.5em{\scshape i\kern-0.25em b}\kern-0.8em\TeX}}}
\begin{document}
\title{\method{}: Enabling \XOR{} in Decimal Space for Streaming Lossless Compression of Floating-point Data}

\author{Chuanyi Lv}
\affiliation{
  \institution{Zhejiang University, China}
}
\email{chuanyi.lv@zju.edu.cn}

\author{Huan Li}
\affiliation{
  \institution{Zhejiang University, China}
}
\email{lihuan.cs@zju.edu.cn}

\author{Dingyu Yang}
\affiliation{
  \institution{Zhejiang University, China}
}
\email{yangdingyu@zju.edu.cn}

\author{Zhongle Xie}
\affiliation{
  \institution{Zhejiang University, China}
}
\email{xiezl@zju.edu.cn}

\author{Lu Chen}
\affiliation{
  \institution{Zhejiang University, China}
}
\email{luchen@zju.edu.cn}

\author{Christian S. Jensen}
\affiliation{
  \institution{Aalborg University, Denmark}
}
\email{csj@cs.aau.dk}

\begin{abstract}
With streaming floating-point numbers being increasingly prevalent, effective and efficient compression of such data is critical. Compression schemes must be able to exploit the similarity, or smoothness, of consecutive numbers and must be able to contend with extreme conditions, such as high-precision values or the absence of smoothness.
We present \method{}, a novel framework that enables \DXOR{} procedure to encode decimal-space longest common prefixes and suffixes, achieving optimal prefix reuse and effective redundancy elimination. To ensure accurate and low-cost decompression even with binary-decimal conversion errors, \method{} incorporates 1) scaled truncation with error-tolerant rounding and 2) different bit management strategies optimized for \DXOR{}. 
Additionally, a robust exception handler enhances stability by managing floating-point exponents, maintaining high compression ratios under extreme conditions.
In evaluations across 22 datasets, \method{} surpasses state-of-the-art schemes, achieving a 15\% higher compression ratio and a 20\% faster decompression speed while maintaining a competitive compression speed. \method{} also offers scalability under varying conditions and exhibits robustness in extreme scenarios where other schemes fail.
\end{abstract}

\settopmatter{printacmref=false} 
\pagestyle{plain}               

\maketitle

\section{Introduction}
\label{sec:intro}


\begin{figure}[t]
    \centering
    \includegraphics[width=0.96\columnwidth]{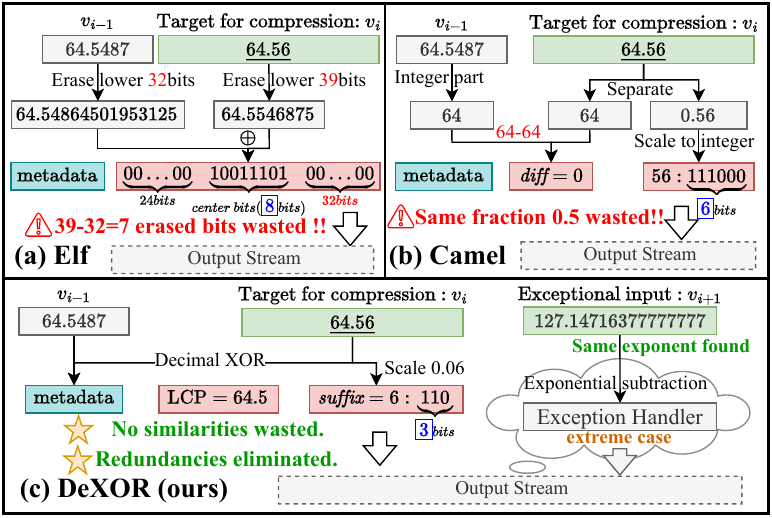}
    \caption{
    Compressing the target $\uline{64.56}$ with the previous value $64.5487$:
    (a) \Elf{}~\cite{li_elf_2023,Elf_Plus_2023} 
    removes trailing zeros in the binary representation but wastes bits after \XOR{}.
    (b) \Camel{}~\cite{camel} targets the integer part of a decimal value, leaving the fractional part unoptimized.
    (c) \method{} extracts the longest common decimal prefix ($64.5$) to exploit smoothness and scales the suffix (from $0.06$ to $6$) to remove redundancy. It also includes an exception handler based on exponential subtraction for extreme-case target values ($\uline{127.14716377\cdots}$).
    }
    \label{fig:combination}
    \vspace{-2mm}
\end{figure}

The rapidly growing Internet of Things~\cite{li2022spatial} emits massive volumes of streaming data, typically in the form of time series. For example, aviation systems emit terabytes per second~\cite{Jensen_Pedersen_Thomsen_2017}.
Across many domains, including industrial automation, finance, and scientific research, high-precision floating-point data streams require accurate decompression, as even minor inaccuracies can have critical operational consequences.
As a result, lossless compression with real-time efficiency is a fundamental and pressing challenge.

To address this challenge, \emph{streaming lossless compression} (\problem{}) has emerged as a key research focus.
The present study targets the common setting where only the previous value is required to compress the current one (Figure~\ref{fig:combination}), addressing scenarios with strict memory constraints or streaming applications where retaining earlier values is infeasible.
\zl{
Existing \problem{} approaches bifurcate into two categories:
\textit{smoothness-based} and \textit{redundancy-based} schemes.
}
Smoothness-based schemes, such as \Gorilla{}~\cite{pelkonen_gorilla_2015} and \Chimp{}~\cite{liakos_chimp_2022}, leverage the similarity among consecutive values, called \emph{smoothness}, to compress using bitwise operations (e.g., \XOR{}).
While effective for sequential patterns, these schemes struggle with the complex storage formats used for floating-point numbers.
In contrast, redundancy-based schemes target representational redundancy in floating-point structures, introducing leading or trailing zeros to minimize bit storage.
Prominent examples include scaling-to-integer schemes (\PDE{}~\cite{kuschewski_btrblocks_2023} and \ALP{}~\cite{afroozeh_alp_2023}). 
However, such schemes often do not exploit inherent smoothness in the data, causing inferior compression ratios.


\zl{Recent} proposals, including the \Elf{} series~\cite{li_elf_2023,Elf_Plus_2023,li2025adaptive,li2025serf} and \Camel{}~\cite{camel}, attempt to combine smoothness- and redundancy-based schemes (see Figure~\ref{fig:combination}(a) and (b)). 
However, this integration causes interference, where redundancy elimination involves format conversion, disrupting smoothness.
For example, in the \Elf{} series, mantissa erasure disrupts smoothness by misaligning binary representations with differing precision. In Figure~\ref{fig:combination}(a), $\uline{64.56}$ erases $39$ trailing zeros, while $64.5487$ erases $32$, resulting in $\uline{64.56}$ losing $7$ additional trailing zeros after the \XOR{}, leading to redundancy in the compressed output.
Next, \Camel{}~\cite{camel}, covered in Figure~\ref{fig:combination}(b), separates integer and fractional components, partially exploiting similarities in the integer parts ($64-64$), while reducing redundancies in the fractional part by scaling ($0.56$ to $56$ for an integer-based representation). However, this scheme does not exploit smoothness in the fractional parts, limiting its compression ratios.

Unlike schemes constrained by binary systems, 
\zl{
\Camel{} pioneers smoothness exploitation in the decimal domain, inspiring our design of \method{}, a unified framework that synergizes smoothness- and redundancy-based techniques via \emph{decimal-space conversion}.
}
Its core is the \DXOR{}, which transforms binary floating-point values into decimal-aligned ones, \zl{isolating the largest common prefix for smoothness while leaving residuals for redundancy elimination.}
%
Consider the target $\uline{64.56}$ in Figure~\ref{fig:combination}(c). 
Using \DXOR{}, the \textbf{longest common prefix (LCP)} with the previous value --- e.g., \emph{64.5} is shared between $\uline{64.56}$ and $64.5487$ --- is identified and removed.
The residual value (0.06) is then scaled into an integer suffix (6), achieving optimal redundancy elimination.

While \DXOR{} is conceptually simple, its 
\zl{processing w.r.t \problem{} workflows faces} three key challenges:
\begin{itemize}[leftmargin=*]
     \item 
     \zl{
     \textbf{C1 (Lossless Reconstruction)}: Binary-to-decimal conversion risks precision loss (see Section~\ref{ssec:problem}), compromising precise suffix extraction and violating the guarantee of \emph{lossless} reconstruction.
     }
    
    \item 
     \zl{
     \textbf{C2 (Computational Overhead)}: Prefix extraction for smoothness, as well as suffix processing for redundancy elimination, imports additional computational and metadata costs, increasing time complexity and storage demands.
     }
    
    \item 
     \zl{
     \textbf{C3 (Robustness in Extreme Cases)}: For high-precision, non-smooth datasets with little redundancy, \DXOR{} benefits may decline, just incurring increased time or space costs.
     }
    
\end{itemize}

It is essential to address these challenges, \method{} offers a suite of \zl{sophisticated}
optimizations.

%

To address Challenge C1, \method{} ensures accurate prefix and suffix identification through a \textbf{\emph{scaled truncation approach with error-tolerant rounding}} (Section~\ref{sssec:preconditions}).
Unlike traditional schemes that convert from binary to decimal formats for string matching, \method{} computes prefixes and suffixes directly from \zl{vanilla} binary format.
Moreover, an error-tolerant rounding mechanism is introduced to mitigate edge-case risks, ensuring the \DXOR{} operation is robust and lossless.
This eliminates conversion-induced prefix and suffix identification errors while additionally reducing computational overhead.

To address Challenge C2, \method{} offers several \textbf{\emph{\DXOR{}-specific bit management strategies}}:
(i) \method{} minimizes prefix storage complexity by encoding the LCP and suffix using minimal coordinate data (i.e., metadata in Figure~\ref{fig:combination}(c)), and a lightweight 2-bit case encoding further enables efficient reuse of these coordinate results (Section~\ref{sssec:postprocessing}).
(ii) By identifying the sign consistency between the suffix and prefix; and by assessing the inefficiency of variable-length suffix storage, \method{} adopts an optimal fixed-bit unsigned suffix encoding design (Section~\ref{ssec:compressor}).

To address Challenge C3, \method{} includes an \textbf{\emph{exception handler}} operating in parallel with the \DXOR{}-driven main pipeline, designed expressly to handle high-precision or non-smooth streams.
The handler isolates the floating-point exponent and encodes the exponential subtraction (see Figure~\ref{fig:combination}(c)), leveraging smoothness in the data while bypassing data components with minimal benefits (Section~\ref{ssec:ES}).
Given the bounded range, low volatility, and low-bit density of exponential subtraction, \method{} introduces compact adaptive-length encoding to efficiently store essential information (Section~\ref{ssec:elastic}).  The handler can also function as a standalone compressor for datasets known to exhibit exceptional properties.


We evaluate \method{} using 22 publicly available datasets commonly used in state-of-the-art research~\cite{pelkonen_gorilla_2015,liakos_chimp_2022,li_elf_2023,Elf_Plus_2023,camel}. 
Across all metrics, \method{} establishes itself as the new leader, achieving 15\% higher compression ratios and 20\% faster decompression speeds compared to the best-performing competitors, while maintaining comparable compression speeds.
On low-precision datasets, \method{} delivers an impressive 30\% improvement in decompression speed and 20+\% faster compression speed. It achieves the highest compression ratios across nearly all datasets and exhibits robust performance, maintaining competitive compression even in extreme cases where other schemes fail.
\cy{Consistent 15\%--20\% compression gains are \cyr{R4.W3 \\ \&D3} impactful as they offer benefits across storage cost~\cite{pelkonen_gorilla_2015}, network bandwidth~\cite{google2018}, query performance~\cite{abadi_columnstores,clickhouse}, and system sustainability~\cite{maneas2022ssd,energy-efficient}, thereby delivering substantial benefits.}

The main contributions are summarized as follows.
\begin{itemize}[leftmargin=*]
\item We revisit existing \problem{} schemes and provide insights into leveraging smoothness and redundancy with decimal-space operations. Building on these insights, we propose \method{}, a comprehensive \problem{} scheme centered around \DXOR{} (Sections~\ref{sec:preliminaries} and~\ref{sec:overall}).


\item We equip \method{}'s main pipeline with \DXOR{}-specific optimizations, including scaled truncation with error-tolerant rounding for efficient and precise prefix-suffix extraction, as well as different bit management strategies to ensure compact storage of data and metadata (Section~\ref{sec:main}).


\item We present an exception handler to complement \DXOR{} in extreme settings involving high-precision and non-smooth streams.
The handler stabilizes exponents and employs adaptive encoding, maintaining high compression ratios even where other schemes fall short (Section~\ref{sec:Exception}).

\item We report on extensive experiments involving pertinent datasets and baselines, offering evidence that \method{} is capable of superior compression ratios and decompression speeds, offers scalability under varying conditions, and achieves robustness in extreme scenarios (Section~\ref{sec:exp}).

\end{itemize}
\section{Preliminaries}
\label{sec:preliminaries}

\begin{table*}[ht]
\caption{Examples of approaches in converters (assuming the previous value $v_1 = 88.1537$ and the current value $v_2 = 88.1479$): \colorbox{emphbgcolor2}{leading zeros}, \colorbox{emphbgcolor1}{center bits}, and \colorbox{emphbgcolor3}{trailing zeros} are highlighted. CBL includes additional cost, e.g., \cy{the decimal separation method requires a mapping operation, resulting in an index cost of 4 bits, i.e., ($\text{index} = 0010$)} (see row 3 in Table~\ref{tab:converter-methods}).}
\label{tab:converter-methods}
\centering
\begin{adjustbox}{width=\textwidth}
\begin{tabular}{@{}clllc@{}}
\toprule
\textbf{Type} & \textbf{Approaches} & \textbf{Conversion} & \textbf{Binary Representation (after conversion)} & \textbf{CBL} \\ 
\midrule
- & Original IEEE754 & $\omega(88.1479)$ & \colorbox{emphbgcolor2}{0}\colorbox{emphbgcolor1}{100000001010110000010010111011100110001100011111100010100000101} 
& 63 \\ \midrule 
\multirow{2}{*}{\ding{172}} & \XOR{} (\Gorilla{}~\cite{pelkonen_gorilla_2015} and \Chimp{}~\cite{liakos_chimp_2022}) & $\omega(88.1479)\textcolor{textcolorgray}{~\oplus~\omega(88.1537)}$ & \colorbox{emphbgcolor2}{000000000000000000000000}\colorbox{emphbgcolor1}{101000010000100100001001100111000100111}\colorbox{emphbgcolor3}{0} 
& 39 \\ 
 & Decimal separation (\Camel{}~\cite{camel}) & $(88.1479 \textcolor{textcolorgray}{-88)~}\textcolor{textcolorgreen}{\%2^{-4}} \textcolor{textcolorgray}{\times 10^4}$ & \colorbox{emphbgcolor2}{0000000000000000000000000000000000000000000000000000} \colorbox{emphbgcolor1}{11100101 $\implies$ \textcolor{textcolorgreen}{(index=0010)}} & 12  \\ \midrule
\multirow{2}{*}{\ding{173}} & Erasure (\Elf{}~\cite{li_elf_2023}) & $88.147\textcolor{textcolorgray}{88818359375}$ & \colorbox{emphbgcolor2}{0}\colorbox{emphbgcolor1}{1000000010101100000100101110111}\colorbox{emphbgcolor3}{00000000000000000000000000000000} 
& 31 \\ 
 & Scaling to integers (\ALP{}~\cite{afroozeh_alp_2023}) & $88.1479 \textcolor{textcolorgray}{~\times 10^6 \times 10^{-2} = 881479}$ & \colorbox{emphbgcolor2}{00000000000000000000000000000000000000000000}\colorbox{emphbgcolor1}{11011110111111100111} & 20 \\ 
 \midrule
\ding{174} & Decimal \XOR{} (\textbf{ours}) & $(\uline {88.1479} \textcolor{textcolorgray}{~\diamond~ 88.1537) = 479}$ & \colorbox{emphbgcolor2}{0000000000000000000000000000000000000000000000000000000}\colorbox{emphbgcolor1}{111011111} & 9  \\ 
\bottomrule
\end{tabular}
\end{adjustbox}
\end{table*}

\subsection{Numerical Data Representation}
\label{ssec:numerical_data_format}

In practice, numerical values are most commonly represented as floating-point numbers rather than integers, with the \textbf{IEEE754 standard}~\cite{IEEE754_2019} being the \emph{de facto} format. 
Moreover, computer systems must store each numerical value $v_i$ using a binary representation $\omega_i$, most often in 64-bit double-precision form:
\begin{equation}\label{equ:ieee754}
f_{\text{IEEE754}} : v_i \rightarrow \omega_i =  
\underbrace{\mathit{sign}}_{1~bit}~||~
\underbrace{\mathit{exp}}_{11~bits}~||~
\underbrace{\mathit{fraction}}_{52~bits}.
\end{equation}
The original decimal value $v_i$ can be reconstructed as:
\begin{equation}\label{equ:ieee754_reconstruct}
\cy{v_i \approx \mathit{sign} \times 2^{(\mathit{exp} - \mathit{bias})} \times (1 + \mathit{fraction}),}
\end{equation}
where $\mathit{sign}$ indicates polarity; \cyr{R1.D6 \\ (1)}\cy{
$\mathit{exp}$ is the stored exponent; $\mathit{bias}$ is a predefined offset that maps negative exponents to non-negative values for compact storage (the actual exponent is $\mathit{exp}-\mathit{bias}$);} and $\mathit{fraction}$ represents the significand as a fraction, which, when added to 1, forms the complete significand value.
This format ensures compact and efficient representation across platforms.
\lcyr{M1.(1)}\lcy{
This study adopts double‐precision (with $\mathit{bias}=1023$ in Equation~\ref{equ:ieee754_reconstruct}) by default due to its higher accuracy. Nevertheless, the proposed compression techniques can be applied readily to single‐precision values, with a similar format ($8$-bit $\mathit{exp}$ and $23$-bit $\mathit{fraction}$) but $\mathit{bias}=127$.}

\begin{myquote}
\noindent\textbf{Rounding Errors in Binary-decimal Conversions}.
Limited bits in IEEE754 introduce rounding errors during binary-decimal conversions in Equation~\ref{equ:ieee754_reconstruct}. For example, when storing a decimal value $v_i = 88.1479$, it may be reconstructed as $v'_i = 88.14790000000000702\cdots$, causing a negligible error (e.g., the rounding error is below $2^{(\mathit{exp}-\mathit{bias})-52}$ for 64-bit precision). While typically insignificant, rounding errors can accumulate in typical compression schemes involving binary-decimal conversions~\cite{kuschewski_btrblocks_2023,afroozeh_alp_2023,camel}; thus, care is needed to maintain accuracy.
We discuss how to combat this issue in Section~\ref{ssec:converter}.
\end{myquote}

\subsection{Problem Formulation}
\label{ssec:problem}

Following prior studies~\cite{pelkonen_gorilla_2015,liakos_chimp_2022,li_elf_2023,Elf_Plus_2023,camel}, we assume that the compression system processes a stream of incoming numerical values, denoted as $v_1, v_2, \cdots$. \cyr{R1.W1 \\ \&D1 (1)}\cy{Such a stream may originate from a subscription to an existing time series, or it may encompass random data emitted by multiple sources.}
Timestamps, if present, are excluded from consideration in our framework, as they are typically handled separately using specialized compression techniques, such as \textit{Delta-of-delta} compression~\cite{pelkonen_gorilla_2015}. Instead, we focus exclusively on the compression of numerical values. As discussed, such input data is generally represented as bit streams, defined as follows.

\begin{definition}[Bit Stream]
\label{def:ts}
An input bit stream is denoted as $\omega_1\omega_2\cdots$, where each $\omega_i$ is the IEEE754 binary representation of a numerical value $v_i$ from the decimal space. 
\end{definition}


We define our problem as follows.

\begin{theorem*}[Streaming Lossless Compression, \problem{}]
\label{problem:tsc}
\cyr{R1.W1 \\ \&D1 (1)}\cyr{R4.W1 \\ \&D1 (1)}\cy{Given an input bit stream, without assuming temporal smoothness, streaming compression is a function $\mathcal{C}: \omega_{i-N} \cdots \uline{\omega_{i}} \rightarrow \omega'_{i}$ that transforms a target $\uline{\omega_i}$ into a compressed binary representation $\omega'_i$ using a buffer of $N$ preceding values.}
A streaming lossless compression ensures a reverse mapping $\mathcal{D}: \omega'_{i-N} \cdots \omega'_{i} \rightarrow \uline{\omega_i}$, enabling exact recovery of $\uline{\omega_i}$ from the compressed bit stream ($\omega'_i$ and its $N$ preceding ones).
\end{theorem*}


\cy{Our problem statement adopts the univariate streaming setting, thus aligning with prevailing state-of-the-art methods~\cite{pelkonen_gorilla_2015,liakos_chimp_2022,li_elf_2023,afroozeh_alp_2023}. For high-dimensional datasets (e.g., vector data\cyr{R1.W1 \\ \&D1 (2)}), each dimension can be compressed independently and assigned to a separate computational thread\cyr{R1.W3 \\ \&D3}, enabling straightforward parallelization.}

This study addresses \problem{} for $N = 1$, where compression and decompression involve only the preceding value. This minimal context renders it more challenging to achieve effective compression.     
When $N>1$, the larger buffer enables the use of sliding windows (compressing each current using its previous $N$) or truncated windows (compressing $N$ data in a mini-batch). As shown in Section~\ref{ssec:larger_buffer}, our proposal outperforms schemes using larger buffers in terms of compression ratio and/or efficiency.

\subsection{Rethinking Existing \problem{} Schemes}
\label{ssec:rethinking}

In numerical data compression, \textbf{variable-length encoding}~\cite{pelkonen_gorilla_2015} reduces storage by removing unnecessary bits, such as \textbf{leading} or \textbf{trailing zeros}, resulting from fixed assignments.
Referring to the example in Table~\ref{tab:converter-methods}, $v_2$'s IEEE754 representation $\omega(88.1479)$ has one leading zero, which can be omitted. The remaining necessary bits, the \textbf{center bits}, start and end with a \onebit{} bit.

To reconstruct the original value, auxiliary bits (i.e., metadata) are needed to record the length and (start) position of the center bits, with little opportunity to compress these bits.
Thus, reducing the \textbf{center bit length (CBL)} often becomes a bottleneck when trying to achieve higher compression ratios. 
To address this, prior approaches~\cite{blalock_sprintz_2018,liakos_chimp_2022,li_elf_2023,afroozeh_alp_2023,camel} introduce preprocessing steps, leveraging properties such as \emph{data smoothness}~\cite{blalock_sprintz_2018, liakos_chimp_2022,li_elf_2023,camel} or \emph{numerical precision redundancy}~\cite{li_elf_2023,afroozeh_alp_2023,camel}, to shorten the CBL. We refer to these preprocessing steps as \textbf{converters}, with Table~\ref{tab:converter-methods} summarizing common types.

\smallskip
\noindent\textbf{Type-\ding{172} \emph{Smoothness-based Converters}} leverage the smoothness of streaming data, where adjacent values often are similar---a property particularly prevalent in time series.
\begin{example}
    Methods like \Gorilla{}~\cite{pelkonen_gorilla_2015} and \Chimp{}~\cite{liakos_chimp_2022} use \XOR{} operations (e.g., $\omega_2 \oplus \omega_1$) to highlight similarities, yielding many leading zeros that reduce the CBL from 63 to 39 (see row~2 in Table~\ref{tab:converter-methods}).

    \Camel{}~\cite{camel} separates decimal values into integer and fractional parts. For $v_2 = 88.1479$, the integer part $88$ is subtracted from the previous integer part and stored (e.g., $88 - 88 = 0$), while the fractional part $0.1479$ is mapped into a smaller range using modulus $2^{-4}$ (corresponding to the 4 decimal places of $v_2$) and scaled by $10^4$. This reduces the center bits to 8 bits but adds a 4-bit overhead for indexing (row~3 in Table~\ref{tab:converter-methods}).
\end{example}

\smallskip
\noindent\textbf{Type-\ding{173} \emph{Redundancy-based Converters}} target IEEE754 specifically and reduce representational redundancy by utilizing knowledge of data precision.
\begin{example}
    \Elf{}~\cite{li_elf_2023} erases unnecessary mantissas, introducing more trailing zeros while preserving the original decimal precision (stored separately) that can round decimal values back ($88.14788\cdots$ in row 4 of Table~\ref{tab:converter-methods} is rounded up to $88.1479$ if 4 decimal places are known).
 
    \ALP{}~\cite{afroozeh_alp_2023} scales floating-point numbers into integers via $v' = v \times 10^e \times 10^{-f}$, where $\langle e, f \rangle$ are chosen carefully to avoid rounding errors. This scaling adds leading zeros, optimizing the CBL to 20 for $88.1479$ (row~5 in Table~\ref{tab:converter-methods}).
\end{example}

\smallskip
\noindent\textbf{Joint Utilization of Smoothness and Redundancy}.
Most \problem{} converters exploit either smoothness or redundancy, but combining their approaches introduces challenges:
\begin{itemize}[leftmargin=*]
\item \textbf{\emph{Smoothness Disruption}}: Precision-based transformations in type-\ding{173} can disrupt smoothness in type-\ding{172}. 
For instance, the \Elf{} series schemes subtract minimal values based on precision, altering binary representations and reducing similarity to prior values. Similarly, \ALP{} scales values independently, distorting coordinate systems (e.g., $66.101 \rightarrow 66101$ and $66.1 \rightarrow 661$).
\item \textbf{\emph{Partial Solutions}}: \Camel{} preserves smoothness in integer components but ignores fractional similarities (e.g., $88.1479$ and $88.1537$ share $88.1$, but only $88$ is retained). \Elf{} attempts to combine erasure (type-\ding{173}) with \XOR{} (type-\ding{172}), but inconsistencies in CBL after erasure cause inefficiencies in \XOR{}.
\lcyr{M2.(1)}\lcy{Here, smoothness is inter-value continuity. In \Elf{}, variable-length bit erasure is applied to each value based on its precision; the subsequent \XOR{} then aligns the lengths to the longer, causing smoothness loss.}

\end{itemize}

In this study, we propose \textbf{the first type-\ding{174} converter}, which combines the strengths of smoothness- and redundancy-based converters.
\lcyr{M2.(1)}
Our converter is enabled by a decimal-space \XOR{}-like operation that maximizes the exploitation of smoothness (\emph{88.1} shared between $88.1479$ and $88.1537$) while preserving the remaining portion ($88.1479-88.1 = 0.0479$) for precision-based transformation.
As shown in row 6 of Table~\ref{tab:converter-methods}, the converter, named \DXOR{}, achieves a shorter CBL than do existing \problem{} converters.
\begin{figure}[t]
    \centering
    \includegraphics[width=\columnwidth]{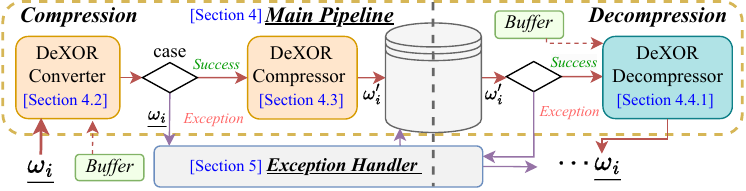}
    \caption{\cyblue{The ``main pipeline'' features a \DXOR{}-based converter and a specialized compressor/decompressor, accompanied by an ``exception handler'' for extreme cases.}}
    \label{fig:our_frame}
\end{figure}

\section{Overall Framework of \method{}}
\label{sec:overall}

The proposed framework, illustrated in \lcyr{M1.(2)}\lcy{Figure~\ref{fig:our_frame}}, compresses each floating-point number $\omega_i$ (in IEEE754 format) through a \emph{main pipeline} (Section~\ref{sec:main}) built around the \DXOR{} technique, supplemented by an \emph{exception handler} (Section~\ref{sec:Exception}) for extreme cases.

\textbf{Main Pipeline}:
The input is first processed by the \method{} converter (Section~\ref{ssec:converter}) to eliminate precision redundancy while exploiting smoothness. 
This step generates a 2-bit case code for execution status during decompression. A special case code, \onebit{}\onebit{}, triggers a switch to the exception process, while other cases proceed with metadata storage and suffix encoding. The suffix, with its significantly reduced information capacity, is then passed to the \method{} compressor (Section~\ref{ssec:compressor}) for specialized variable-length encoding.

\textbf{Exception Handler}:
\cy{Data handled poorly by \method{} (e.g., 16+ decimals) is redirected to the exception handler, in its original IEEE754 format via an \emph{in-line conditional branch}.}\cyr{R2.W2 \\ (2) 2a}
This module identifies the most stable components (e.g., exponent with high smoothness) for subtraction and applies adaptive storage of the subtraction result to reduce the storage overhead. The exception handling process is consistent with the global input structure, enabling it to perform compression independently when necessary.

The framework integrates a \DXOR{}-centered pipeline with a plug-in exception-handling module, to achieve robust performance across diverse datasets. It dynamically determines whether to process data through the main pipeline or the exception handler, optimizing compression for all cases. Thus, \cyr{R2.W1 \\ (2) 1b}\cyr{R2.W2 \\ (3) 2b}\cy{prior knowledge of a data stream can allow direct entry into the exception handler (see Section~\ref{ssec:exception_usage}), reducing the conversion overhead and further improving the compression ratio for high-precision data}.

\section{Main Pipeline in \method{}}
\label{sec:main}

\subsection{Definition of \DXOR{}}
\label{ssec:pilot_exp}

As discussed in Section~\ref{ssec:rethinking}, leveraging both smoothness in binary representations and precision redundancy in decimal values presents challenges. This inspires us to consider from another perspective: floating-point numbers \zl{also} exhibit smoothness in decimal space. This suggests that it is possible to apply \XOR{}-like operations first, to strip away the common parts in decimal space, and then handle the redundancy in the residual. Hence, we propose \textbf{\DXOR{}}, a floating-point data conversion method within a global-aligned decimal coordinate system.
It separates the handling of common decimal prefixes and the rest, addressing smoothness and redundancy independently.

A decimal value $v_{X}$ can be represented by a finite segment on its coordinate range, i.e., $v_{X} = x_{p} x_{p-1} \cdots x_{q+1} x_{q}$, where $p \in \mathbb{Z}$ is the \textbf{head coordinate} (highest decimal place) and $q \in \mathbb{Z}$ is the \textbf{tail coordinate} (lowest decimal place).
We have $v_{X} = \sum\nolimits_{j = q}^{p} x_j \times 10^j$.
The \textbf{data precision} $dp$ is defined as the length of meaningful decimal digits, i.e., $dp = p - q + 1$.
Referring to Figure~\ref{fig:DXOR}, for the target value $v = 88.1479$, we have $p = 1$, $q = -4$, and $dp = 1 - (-4) + 1 = 6$.


\begin{figure}
    \centering
    \includegraphics[width=\columnwidth]{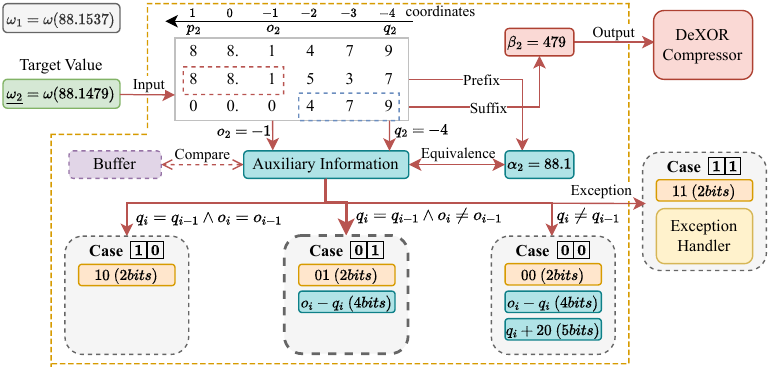}
    \caption{\cyblue{The \method{} converter keeps coordinates as auxiliary information to replace the extracted prefix, while routing the suffix to the compressor for variable-length encoding.}}
    \label{fig:DXOR}
\end{figure}

Given $\uline{v_{X}}$ the target and $v_{Y}$, represented by their decimal digits $x_j$ and $y_j$ ($j \in \mathbb{Z}$), we define the asymmetric $\DXOR{}$ as:
\begin{equation}
\label{equ:dxor}
\uline{v_{X}} \diamond v_{Y} = \Big(\sum\nolimits_{j = -\infty}^\infty \big(\uline{x_j} \diamond y_j\big) \cdot 10^j \Big) \cdot 10^{(-q)},
\end{equation}
where $x_j \diamond y_j$ denotes the \DXOR{} operation applied to the individual digits $x_j$ and $y_j$, and is defined as:
\begin{equation}
\uline{x_j} \diamond y_j = 
\begin{cases} 
0 & \text{if } \forall k \in [j,\infty) (x_k=y_k) \\
\uline{x_j} & \text{otherwise}
\end{cases}
\end{equation}

To put it simply, we have $\uline{88.1479} \diamond 88.1537 = 0.0479 \times 10^{4} = 479$. This leads to a CBL as small as 9 (see type-\ding{174} in Table~\ref{tab:converter-methods}).
To further verify its efficacy, we conduct a pilot experiment comparing different converters using six representative benchmark datasets, including both time-series and non-time-series ones (see dataset descriptions in Section~\ref{ssec:settings}). \cy{Figure~\ref{fig:pilot} reports the average CBLs after conversion. \cyr{R1.W4 \\ \&D4 (1)}By design the CBL captures the \textbf{intrinsic compression potential} of each converter and excludes any metadata overhead.}


\begin{figure}
    \centering
    \includegraphics[width=\columnwidth]{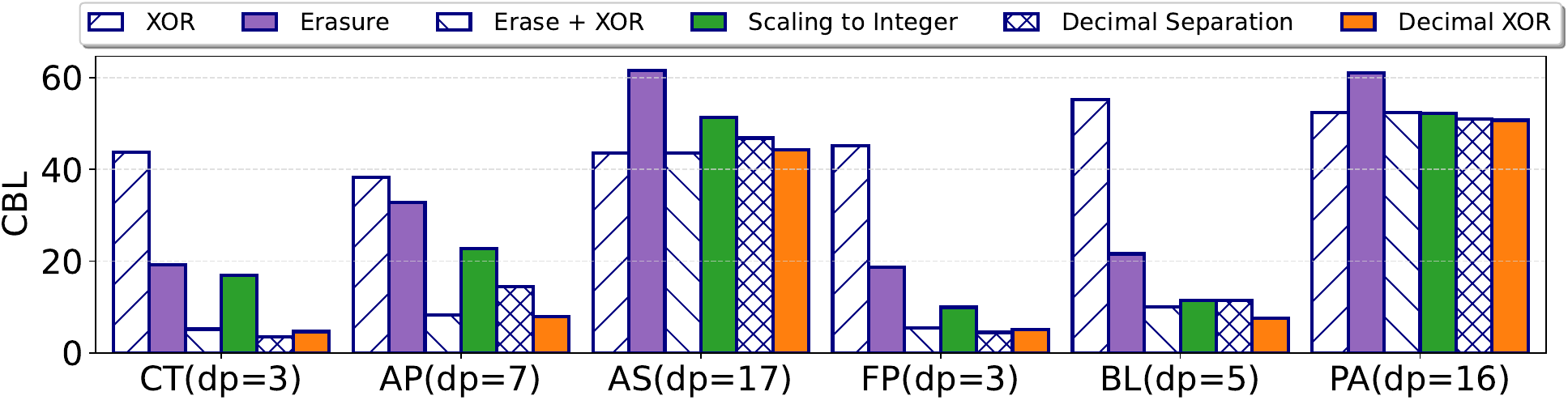}
    \caption{CBL (metadata overhead excluded) reported for \emph{time-series datasets}: \textsf{City-temp} (\textsf{CT}), \textsf{Air-pressure} (\textsf{AP}), and \textsf{Air-sensor} (\textsf{AS}), and \emph{non-time-series datasets}: \textsf{Food-Price} (\textsf{FP}), \textsf{Blockchain-tr} (\textsf{BL}), and \textsf{POI-lat} (\textsf{PA}). Datasets are sorted by data precision ($dp$) per category.}
    \label{fig:pilot}
\end{figure}

We draw the following key observations: 
1) While \XOR{} (type-\ding{172}) is insensitive to increases in $dp$, the resulting CBL remains consistently poor (CBL $\geq 38$). 
2) Methods involving precision redundancy (type-\ding{172} and type-\ding{173}) degrade in performance as $dp$ increases, with the erasure method nearly failing when $dp$ exceeds 15 bits (CBL $> 60$); the scaling-to-integers approach performs worse on time-series data compared to non-time-series data as it fails to leverage the smoothness.
3) \DXOR{} achieves the best CBL reduction in most cases; this is attributed to its successful integration of the strengths of type-\ding{172} and type-\ding{173} methods.
Interestingly, \DXOR{} performs similarly to \XOR{} in high-$dp$ scenarios (\textsf{AS} and \textsf{PA}); these exceptional cases will be analyzed and addressed in Section~\ref{sec:Exception}.

\subsection{The \method{} Converter}
\label{ssec:converter} 

While \DXOR{} offers significant advantages, implementing it in \problem{} scenarios is non-trivial. The compression system operates on binary IEEE754 data, requiring costly \emph{binary-decimal conversions} to identify similar prefixes in the decimal space. 
Moreover, conversion errors can compromise accurate prefix and suffix detection.
To resolve these issues, we propose a novel approach in Section~\ref{sssec:preconditions}, called scaled truncation with error-tolerant rounding, which enables lightweight and precise prefix-suffix identification. Furthermore, Section~\ref{sssec:postprocessing} introduces an efficient method for storing metadata associated with identified prefixes and suffixes.

\subsubsection{Prefix and Suffix Identification}
\label{sssec:preconditions} 

Let $\alpha$ be the \textbf{longest common prefix (LCP)} shared between $\uline{v_{X}}$ and $v_{Y}$.
We have $\uline{v_{X}} \diamond v_{Y} = v_{X} - \alpha = \beta'$, where $\beta'$ is a residual value.
Let $\beta$ be the suffix after \DXOR{} and we have $\beta = \beta' \times 10^{-q}$ where $q$ is the tail coordinate of $v_{X}$.
We use $o \in [p, q]$ to denote the LCP coordinate associated with $\alpha$ in $v_{X}$. 
Given the current value $\uline{v_{X}} = 88.1479$ and the previous value $v_{Y} = 88.1537$ in Figure~\ref{fig:DXOR}, we have $\alpha = 88.1$, $o = -1$, $\beta' = 0.0479$, and $\beta = 479$. 

To determine the prefix $\alpha$ and the suffix $\beta$, two coordinates are required: (1) the LCP coordinate $o$, as $\alpha = \sum\nolimits_{j = o}^{\infty} x_j \times 10^o$; and (2) the tail coordinate $q$, as $\beta = \beta' \times 10^{-q} = (v_{X} - \sum\nolimits_{j = o}^{\infty} x_j \times 10^o) \times 10^{-q}$.

However, $v_{X}$ and its decimal space coordinates are unknown as compression systems take in only binary values.
A naive method is to reconstruct a binary value $\omega_X$ into its approximate decimal representation $v'_X$ using Equation~\ref{equ:ieee754_reconstruct}, convert $v'_X$ to a \texttt{String} for prefix-suffix matching with $\texttt{String}(v'_Y)$, and then revert the \texttt{String} results back to numerical values for further processing.
This is inefficient due to the overhead of multiple type conversions, and can fail to accurately identify prefixes and suffixes due to IEEE754 conversion errors ($v'_X \approx v_X$), as discussed later in this section.

To address this issue, we propose a lightweight approach to determine $o$ and $q$ using simple scaling and truncation operators. 
The approach is formalized through the following two lemmas. 
\begin{lemma}[LCP Coordinate Judgment]
\label{theo:CP}
    \cyr{R1.D6 \\ (2)}
    \cy{
    Any common prefix $\alpha'$ between $v_{X}$ and $v_{Y}$ satisfies $\alpha' = \sum\nolimits_{j = l}^{\infty} x_j \times 10^j =$ 
    \begin{equation} \label{equ:common_prefix}
        \cy{\operatorname{trunc}}(v_{X} \times 10^{-l})  \times 10^l = \cy{\operatorname{trunc}}(v_{Y} \times 10^{-l}) \times 10^l,
    \end{equation}
    where $l \geq q$ is the ending coordinate that indicates $\alpha'$ and $\operatorname{trunc}(\cdot)$ denotes the \emph{truncation} function.
    The LCP coordinate $o$ is the smallest $l$ that satisfies this condition: 
    $\nexists~o' < o (\cy{\operatorname{trunc}}(v_{X} \times 10^{-o'} )\times 10^{o'} = \cy{\operatorname{trunc}}(v_{Y} \times 10^{-o'}) \times10^{o'})$. 
    %
    }
\end{lemma}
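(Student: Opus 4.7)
The plan is to prove the lemma in two stages matching its two claims: first, a \emph{digit-wise truncation identity} showing the two equalities in Equation~\ref{equ:common_prefix}; second, a \emph{minimality argument} for the LCP coordinate $o$.

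For the first stage, I would start from the canonical decimal expansion $v_X = \sum_{j=q}^{p} x_j \times 10^j$, treating any $x_j$ outside $[q,p]$ as zero. Multiplying by $10^{-l}$ shifts each $x_j$ to place value $10^{j-l}$, so $v_X \times 10^{-l} = \sum_{j \geq l} x_j \times 10^{j-l} + \sum_{j < l} x_j \times 10^{j-l}$. Here the second sum is strictly less than $1$ (it is the fractional tail), and the first is an integer. The truncation operator discards exactly that fractional tail, leaving $\sum_{j \geq l} x_j \times 10^{j-l}$; remultiplying by $10^l$ restores the original place values and yields $\sum_{j \geq l} x_j \times 10^j$. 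The same chain applies to $v_Y$. By the definition of a common prefix $\alpha'$ ending at coordinate $l$, we have $x_j = y_j$ for all $j \geq l$, so the two summations coincide, establishing both equalities in Equation~\ref{equ:common_prefix}. The converse direction --- that equality of the two truncations forces digit agreement on $[l, \infty)$ --- follows immediately from uniqueness of the (canonical) decimal expansion of an integer.

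For the second stage, I would invoke the definition of the LCP coordinate: $o$ is the largest index at which $v_X$ and $v_Y$ still agree on all digits from $o$ upward, so that at least one of $x_{o-1}, y_{o-1}$ differs (or $o = q$, meaning the prefix extends to the full precision of one operand). Suppose toward contradiction that some $o' < o$ also satisfied the truncation equality. Applying the first stage in reverse, this would force $x_j = y_j$ for every $j \geq o'$, and in particular $x_{o-1} = y_{o-1}$, contradicting the maximality of $o$. Conversely, for any $l \geq o$, the digit agreement on $[o,\infty) \supseteq [l,\infty)$ combined with the first stage gives the truncation equality. Hence $o$ is precisely the smallest $l$ satisfying the condition.

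The main obstacle I anticipate is the \emph{non-uniqueness} of decimal representations at boundary cases (e.g.\ $0.1000\ldots = 0.0999\ldots$), since the truncation step interacts sensitively with whichever representation is chosen. I would address this by fixing the canonical terminating expansion on $[q, p]$ introduced in Section~\ref{ssec:pilot_exp}, so that truncation behaves predictably. A subtler, but related, difficulty --- likely deferred to Section~\ref{sssec:preconditions} rather than the lemma proper --- is that \method{} actually operates on IEEE754 approximations $v'_X \approx v_X$; the purely decimal identity proved here must later be paired with the error-tolerant rounding mechanism to ensure the two truncations remain equal despite conversion error, which is the bridge from this clean number-theoretic statement to the implementable compression pipeline.
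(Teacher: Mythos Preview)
The paper actually omits the proof entirely, stating only ``We omit the proof for its simplicity.'' Your two-stage argument---the digit-wise truncation identity followed by the minimality-by-contradiction step---is exactly the natural way to fill in this gap and is correct in substance. One small slip: in the second stage you write that ``$o$ is the \emph{largest} index at which $v_X$ and $v_Y$ still agree on all digits from $o$ upward,'' but you mean \emph{smallest} (the set of indices $l$ for which all digits from $l$ upward agree is upward-closed, and $o$ is its minimum); the remainder of your paragraph already treats it that way, so this is purely a wording issue and does not affect the argument. Your closing remarks on non-unique decimal expansions and IEEE754 drift are apt and, as you anticipate, are handled outside the lemma via the fixed canonical representation on $[q,p]$ and the error-tolerant rounding of Section~\ref{sssec:preconditions}.
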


We omit the proof for its simplicity.
\begin{example}
    For LCP coordinate $o = -1$ of $88.1479$, $\cy{\operatorname{trunc}}(88.1479 \times 10^1) \times 10^{-1} = 88.1 = \cy{\operatorname{trunc}}(88.1537 \times 10^1) \times 10^{-1}$. However, for $o' = -2$, $\cy{\operatorname{trunc}}(88.1479 \times 10^2) \times 10^{-2} = 88.14 \neq 88.15 = \cy{\operatorname{trunc}}(88.1537 \times 10^2) \times 10^{-2}$. For $o' = 0$, the common prefix is $88$, but not the longest.
\end{example}

Lemma~\ref{theo:CP} suggests that we can iterate from the tail coordinate $q$ to find the smallest coordinate satisfying Equation~\ref{equ:common_prefix}. However, since $q$ is unknown, it must be determined first.
\begin{lemma}[Tail Coordinate Judgment]
\label{theo:end}
    A coordinate $q$ is the tail coordinate of $v_{X} (v_{X} \neq 0)$ if and only if:
    \begin{equation} \label{equ:q}
    \begin{split}
    &\bigl(\cy{\operatorname{trunc}}(v_{X} \times 10^{-q})
    = v_{X} \times 10^{-q}\bigr) \\
    &\quad \land
    \bigl(\cy{\operatorname{trunc}}(v_{X} \times 10^{-(q+1)})
    \neq v_{X} \times 10^{-(q+1)}\bigr)
\end{split}
\end{equation}
\end{lemma}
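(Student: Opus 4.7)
The plan is to prove the biconditional by writing $v_{X}$ in its canonical finite decimal expansion $v_{X} = \sum_{j=q}^{p} x_j \times 10^j$ with $x_q \neq 0$ (as implicit in the definition of tail coordinate given in Section~\ref{ssec:pilot_exp}), and then analyzing how multiplication by $10^{-q}$ and $10^{-(q+1)}$ shifts the decimal point relative to the nonzero digit at position $q$. The key observation is that $\operatorname{trunc}(x) = x$ if and only if $x$ is an integer, so both conditions in Equation~\ref{equ:q} are really statements about the integrality of scaled versions of $v_{X}$.

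For the forward direction ($\Rightarrow$), I would assume $q$ is the tail coordinate. Then $v_{X} \times 10^{-q} = \sum_{j=q}^{p} x_j \times 10^{j-q}$ is a finite sum of nonnegative powers of $10$ (since $j \geq q$ implies $j - q \geq 0$), and hence an integer, giving $\operatorname{trunc}(v_{X} \times 10^{-q}) = v_{X} \times 10^{-q}$. For the second factor, $v_{X} \times 10^{-(q+1)} = \sum_{j=q}^{p} x_j \times 10^{j-q-1}$ contains the term $x_q \times 10^{-1}$, which by $x_q \in \{1,\ldots,9\}$ contributes a fractional part in $\{0.1, 0.2, \ldots, 0.9\}$ that cannot be cancelled by the remaining terms (all of which are integers). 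Hence $\operatorname{trunc}(v_{X} \times 10^{-(q+1)}) \neq v_{X} \times 10^{-(q+1)}$.

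For the reverse direction ($\Leftarrow$), I would assume both conditions of Equation~\ref{equ:q} hold and show that $q$ matches the definition of the tail coordinate. The first condition forces $v_{X} \times 10^{-q}$ to be an integer, which implies that all decimal digits of $v_{X}$ at positions strictly below $q$ must vanish; otherwise the contribution $\sum_{j<q} x_j \times 10^{j-q}$ would leave a residual strictly between $0$ and $1$, contradicting integrality. Hence the true tail coordinate is at least $q$. The second condition then rules out the possibility that it is strictly greater than $q$: if $x_q$ were $0$, the same argument as before would make $v_{X} \times 10^{-(q+1)}$ an integer (the only potentially non-integer contribution $x_q \times 10^{-1}$ would vanish), contradicting the hypothesis. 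Therefore $x_q \neq 0$, so $q$ is exactly the tail coordinate.

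The main subtlety lies in justifying that the canonical decimal expansion of $v_{X}$ is finite (so the lowest nonzero digit is well-defined). This is implicit because \problem{} operates on IEEE754 values that admit finite, though possibly long, decimal representations, and the tail coordinate defined in Section~\ref{ssec:pilot_exp} presupposes this well-posedness. The assumption $v_{X} \neq 0$ in the lemma statement is precisely what guarantees that such a lowest nonzero digit exists; I would flag this explicitly at the start of the proof so that the expansion $v_{X} = \sum_{j=q}^{p} x_j \times 10^j$ with $x_q \neq 0$ is unambiguous, after which the arithmetic in the two directions is routine.
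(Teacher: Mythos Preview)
Your proposal is correct and follows essentially the same approach as the paper's proof: both directions hinge on the observation that $\operatorname{trunc}(x)=x$ exactly when $x$ is an integer, so the two conditions in Equation~\ref{equ:q} characterize $q$ as the rightmost coordinate at which $v_X \times 10^{-q}$ becomes integral. The paper's proof is considerably terser (a few sentences asserting integrality and non-integrality without expanding the digit sum), whereas you spell out the expansion $v_X=\sum_{j=q}^p x_j\times 10^j$ and track the offending term $x_q\times 10^{-1}$ explicitly; but the underlying argument is the same.
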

\begin{proof}
\textbf{\emph{Sufficiency}}: if $q$ is the tail coordinate, the scaled value $v_{X} \times 10^{-q}$ must be an integer, i.e., $\cy{\operatorname{trunc}}(v_{X} \times 10^{-q}) = v_{X} \times 10^{-q}$. Also, for $q+1$, the scaled value $v_{X} \times 10^{-(q+1)}$ cannot be an integer, as a smaller scaling introduces a fractional part: $\cy{\operatorname{trunc}}(v_{X} \times 10^{-(q+1)}) \neq v_{X} \times 10^{-(q+1)}$.
\textbf{\emph{Necessity}}: if $q$ satisfies the condition in Equation~\ref{equ:q}, $q$ is the rightmost coordinate such that $v_{X} \times 10^{-q}$ is an integer. Thus, $q$ must be the tail coordinate.
\end{proof}

\begin{example}
    For $v_2=88.1479$ with tail $q = -4$:
    $\cy{\operatorname{trunc}}(v_2\times 10^{-(-4)}) = 881479 = v_2 \times 10^{-(-4)}$ whereas $\cy{\operatorname{trunc}}(v_2 \times 10^{-(-3)}) = 88147 \neq 88147.9 = v_2\times 10^{-(-3)}$.
\end{example}

Lemma~\ref{theo:end} gives the condition for determining $q$, but its possible range is too large for an iterative search. 
However, regularities in input data streams can simplify this. For instance, in the \textsf{AP} dataset, adjacent values often share the same tail coordinate ($q_i = q_{i-1}$) 89\% of the time. This observation allows us to reuse the previous tail coordinate $q_{i-1}$ as the initial point for the tail coordinate search.

\smallskip
So far, we have not addressed \textbf{IEEE754 conversion errors}.
Instead of obtaining the exact values $v_{X}$ and $v_{Y}$, we work with their approximations $v'_{X}$ and $v'_{Y}$.
Moreover, scaling operations like $v'_{X} \times 10^{-j}$ (see Equations~\ref{equ:common_prefix} and~\ref{equ:q}) introduce additional errors as $10^{-j}$ is also represented as a floating-point number.
As a result, conditions like $\cy{\operatorname{trunc}}(v'_X \times 10^{-q}) = v'_X \times 10^{-q}$ in Lemma~\ref{theo:end} \cyr{R2.W3 \\ (2) 3b}\cy{may fail due to minor rounding errors.
To address this, we introduce a small tolerance $\Delta$, typically set to $10^{-6}$. If $|\cy{\operatorname{trunc}}(v'_X \times 10^{-q}) - v'_X \times 10^{-q}| < \Delta$, we treat the condition as satisfied. The same tolerance is applied to the equality condition in Lemma~\ref{theo:CP}.
}

\begin{example}
    Assume $v_2 = 88.1479$ with its reconstructed value $v'_2=88.147900000000007\cdots$.
    Scaling introduces small rounding errors, causing exact equality to fail: $v'_2 \times 10^4 = 881479.0000000001\cdots \neq 881479 = \cy{\operatorname{trunc}}(v'_2 \times 10^4)$. As a result, $v'_2.q$ may differ from the actual $v_2.q = -4$.
    However, the difference $|v'_2 \times 10^4 - \cy{\operatorname{trunc}}(v'_2 \times 10^4)| = 0.0000000001\cdots < \Delta$. By allowing this tolerance, we ensure that $v'_2.q = v_2.q = -4$.
\end{example}


The scaled truncation with error‑tolerant rounding is described in Algorithm~\ref{alg:precondition}.
Here, $q_i$ is determined via a locality‑aware heuristic starting from $q_{i-1}$, converging within two steps.
If $\lvert v'_i \times 10^{-q_{i-1}} - \cy{\operatorname{trunc}}(v'_i \times 10^{-q_{i-1}}) \rvert < \Delta$ then $q_i \geq q_{i-1}$, and a forward search is performed until Lemma~\ref{theo:end} is met. Otherwise, a backward search finds the correct $q_i$.
\lcyr{M1.(5)}
And the search for $o_i$ starts from $q_i$ and continues until the first common prefix is identified\lcyr{M1.(5)}.The longest common prefix $\alpha_i$ and the suffix $\beta_i$ are extracted for processing (lines 4--5).


\begin{algorithm}[t]
\caption{\texttt{get\_prefix\_and\_suffix}~(current value $\omega_i$, previous tail coordinate $q_{i-1}$)}
\label{alg:precondition}
\footnotesize
\begin{algorithmic}[1] 

\State $v'_i \gets$ \texttt{binary\_to\_decimal}($\omega_i$) \Comment{Equation~\ref{equ:ieee754_reconstruct}}
\State Tail coordinate $q_i \gets$ \texttt{get\_tail}($v'_i$, $q_{i-1}$) 
\State LCP coordinate $o_i \gets$ \texttt{get\_LCP}($v'_i$, $q_i$) 
\State Prefix $\alpha_i \gets [v'_i \times 10^{-o_i}] \times 10^{o_i}$
\State Suffix $\beta_i \gets \texttt{round}((v'_i - \alpha_i) \times 10^{-q_i})$
\State \Return $q_i$, $o_i$, $\alpha_i$, $\beta_i$

\end{algorithmic}
\end{algorithm}

\subsubsection{Efficient Metadata Storage}
\label{sssec:postprocessing}

Figure~\ref{fig:DXOR} depicts the procedure after identifying the LCP $\alpha_i$ and suffix $\beta_i$. The prefix $\alpha_i$ is excluded from storage, as it can be derived using the LCP coordinate $o_i$ and the previous value $v'_{i-1}$. Storing $o_i$ is far more space-efficient than storing $\alpha_i$. Conversely, $\beta_i$ is passed to the \method{} compressor for variable-length encoding (detailed in Section~\ref{ssec:compressor}).  
Note that $q_i$ must be stored to scale $\beta_i$ back to its decimal value in decompression, e.g., $479 \times 10^{-4} = 0.0479$. Thus, both $o_i$ and $q_i$ are stored as metadata using auxiliary bits.

Assuming the decimal values fall within a coordinate range: $-20 \leq q \leq p \leq 11$~\footnote{This range is sufficient in practical real-world applications such as geographical information systems and scientific computing~\cite{Jensen_Pedersen_Thomsen_2017}.}, $q_i$ and $o_i$ each require 5 bits to cover the interval of $31$. To minimize the sign overhead, an offset of $20$ is added to make $q_i$ and $o_i$ non-negative. Furthermore, instead of storing $o_i$, we store $(o_i - q_i)$, which typically falls within $[0, 15]$ and only requires 4 bits. If $(o_i - q_i) > 15$, the value $v_i$ is treated as high precision and directed to the exception handler in Section~\ref{sec:Exception}.

Importantly, $o_i$ and $q_i$ (and $(o_i - q_i)$) can often be reused without storage, as they frequently match the values of the previous element. A two-bit control code is used to indicate the following reuse cases:
\begin{itemize}[leftmargin = *]
\item \textbf{Case \onebit{}\zerobit{} ($q_i = q_{i-1} \land o_i = o_{i-1}$)}: Only 2 bits are needed to indicate reuse of both $q_{i-1}$ and $o_{i-1}$.
     
\item \textbf{Case \zerobit{}\onebit{} ($q_i = q_{i-1} \land o_i \neq o_{i-1}$)}: In addition to the control code, 4 bits are used to store $(o_i - q_i)$, while 5 bits for $q_i$ are saved.
        
\item \textbf{Case \zerobit{}\zerobit{} ($q_i \neq q_{i-1}$)}: 2 bits are used for the control code, 5 bits for $q_i$, and 4 bits for $(o_i - q_i)$.
\end{itemize}

We do not distinguish whether $o_i = o_{i-1}$ when $q_i \neq q_{i-1}$, as $(o_{i-1} - q_{i-1})$ is rarely reusable in such cases.
\lcyr{M1.(5)}
Instead, the remaining \textbf{Case \onebit{}\onebit{}} serves as the entry point for the exception handler.

\begin{example}
    Referring to Figure~\ref{fig:DXOR}, the binary stream after the \method{} converter is \zerobit{}\onebit{}\zerobit{}\zerobit{}\onebit{}\onebit{}. The first two bits are the control code, and \zerobit{}\zerobit{}\onebit{}\onebit{} indicates $o_i - q_i = -1 - (-4) = 3$. Five bits for $(q_i + 20)$ are saved due to the reuse of $(q_{i-1} + 20)$.
\end{example}

\subsection{The \method{} Compressor}
\label{ssec:compressor}

The \method{} converter extracts the suffix $\beta_i$, the most informative part of the value. Since $\beta_i$ is an integer, its binary representation often contains leading zeros. For simplicity, we refer to the binary representation of $\beta_i$ without leading zeros as the \textbf{binary suffix}. For example, the binary suffix of $\beta_2 = 479$ is \colorbox{emphbgcolor1}{111011111} (see Table~\ref{tab:converter-methods}). The \method{} compressor then reduces the storage overhead of the binary suffix using variable-length encoding.

Figure~\ref{fig:compressor}(a) illustrates the vanilla variable-length encoding: 1 bit for the sign, 6 bits for the length of the unsigned binary suffix (up to 63 bits), and the required bits for the suffix itself. For example, \colorbox{emphbgcolor1}{111011111} uses $1 + 6 + 9 = 16$ bits. To optimize this process, the \method{} compressor employs the following two strategies.

\begin{figure}
    \centering
    \includegraphics[width=\columnwidth]{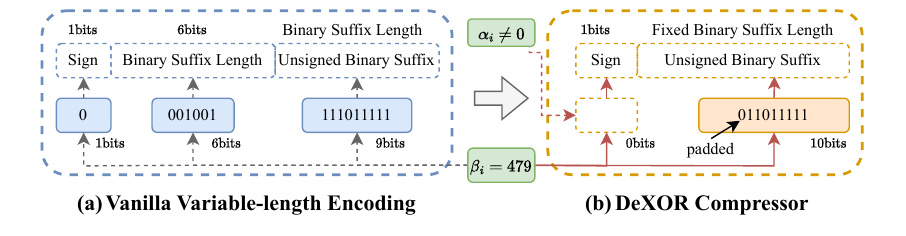}
    \caption{The optimization strategies in the \method{} compressor. The sign overhead can be waived except $\alpha_i=0$.}
    \label{fig:compressor}
\end{figure}

\subsubsection{Optimized Sign Management}

Using Lemma~\ref{theo:positive}, we reduce the overhead of sign storage:
\begin{lemma}[Sign Consistency]
\label{theo:positive}
    $\forall j, k \in \mathbb{Z}, \; x_j \cdot x_k \geq 0 \implies \alpha \cdot \beta' \geq 0 \implies \alpha \cdot \beta \geq 0$.
\end{lemma}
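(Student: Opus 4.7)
The plan is to exploit the positional decomposition already established: $v_X = \sum_j x_j \cdot 10^j$ with $\alpha = \sum_{j \geq o} x_j \cdot 10^j$ and $\beta' = v_X - \alpha = \sum_{j < o} x_j \cdot 10^j$. Because every weight $10^j$ is strictly positive, each partial sum inherits the sign of its constituent digits. So if I can show that all digits of $v_X$ share a common sign (or are zero), then $\alpha$ and $\beta'$ both inherit that sign, and their product is automatically non-negative.

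First I would unpack the premise $x_j \cdot x_k \geq 0$ for all $j, k \in \mathbb{Z}$. Using the sign convention implicit in $v_X = \sum_j x_j \cdot 10^j$ (where a negative value is represented by non-positive digits), this premise just says the digits of $v_X$ cannot mix signs. Then, since $\alpha$ is a positively weighted sum of such digits, I would conclude $\operatorname{sign}(\alpha) \in \{\operatorname{sign}(v_X), 0\}$, and analogously for $\beta'$. Multiplying them gives $\alpha \cdot \beta' \geq 0$, establishing the first implication.

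For the second implication $\alpha \cdot \beta' \geq 0 \implies \alpha \cdot \beta \geq 0$, I would simply note that $\beta = \beta' \cdot 10^{-q}$ with $10^{-q} > 0$, so $\operatorname{sign}(\beta) = \operatorname{sign}(\beta')$ and the product $\alpha \cdot \beta$ carries the same sign as $\alpha \cdot \beta'$. This step is essentially an observation rather than real work.

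The main subtlety I would want to handle carefully is the degenerate case $\alpha = 0$, i.e., when the LCP is empty and $v_X$ has no non-zero prefix shared with $v_Y$. In that case the sign of $\beta$ (equivalently of $v_X$ itself) is not recoverable from $\alpha$ during decompression, which is exactly why the figure caption flags that the sign bit can be dropped \emph{except} when $\alpha_i = 0$. I would make this caveat explicit in the proof so that the lemma's consequence for sign management in Section~\ref{ssec:compressor} is sound: whenever $\alpha \neq 0$, the sign of $\beta$ is determined by the sign of $\alpha$ (which is itself determined by the already-stored previous value and the coordinate $o_i$), so the one-bit sign field for $\beta$ becomes redundant. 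This edge-case bookkeeping, rather than the algebra, is the only place where care is required.
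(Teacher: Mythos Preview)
Your argument is correct and is precisely the natural one: use the positional decomposition $\alpha = \sum_{j \ge o} x_j \cdot 10^j$, $\beta' = \sum_{j < o} x_j \cdot 10^j$, observe that the positive weights $10^j$ let each partial sum inherit the common sign of the digits, and then pass from $\beta'$ to $\beta$ via the positive scalar $10^{-q}$. The paper itself does not supply a proof of this lemma; it simply states it and immediately draws the intended consequence (that the sign bit for $\beta_i$ is redundant whenever $\alpha_i \neq 0$), so your write-up in fact fills a gap the authors left implicit. Your handling of the $\alpha = 0$ edge case is also exactly the caveat the paper flags in the surrounding text and figure caption.
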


Lemma~\ref{theo:positive} implies that when $\alpha_i \neq 0$, the sign of $\beta_i$ matches that of $\alpha_i$. Since $\alpha_i$ is encoded via $o_i$, the sign of $\beta_i$ (and its binary suffix) can be omitted. However, in the rare case where $\alpha_i = 0$, the sign of $\beta_i$ must still be stored in 1 bit.

\subsubsection{Optimized Binary Suffix Length}
\label{sssec:binary_suffix_length}

Given $\beta_i$ as an integer, the length of its unsigned binary suffix is $\ell_i = \lceil \log_2(\operatorname{abs}(\beta_i) + 1) \rceil$ where $\operatorname{abs}(\cdot)$ denotes the absolute value.
E.g., $\beta_2 = 479$ needs $\ell_2 = 9$ bits to store the unsigned binary suffix.
Indeed, the recorded tail coordinate $q_i$ and the LCP coordinate $o_i$ can approximately determine the value $\beta_i$, such that $\operatorname{abs}(\beta_i) \in [10^{\delta-1}, 10^{\delta})$ where $\delta = o_i - q_i$.
E.g., if $\beta_2 = 479$, then $\beta_2 < 10^{(-1-(-4))} = 10^3$. 
This means the maximum required bits to represent $\beta_2$ is $\bar{\ell}_2 = \lceil \log_2(10^3) \rceil = 10$.
By using a fixed length $\bar{\ell}_i = \lceil \log_2(10^\delta) \rceil$, we avoid storing $\ell_i$ (which requires 6-bit storage as this length is up to 63 bits).
For cases where the unsigned binary suffix $\beta_i$ requires fewer bits than the fixed length, leading zeros are padded to match $\bar{\ell}_i$.

Overall, the optimized scheme incurs 4 bits to store $\delta = o_i - q_i$ (see Section~\ref{sssec:postprocessing}) and $\bar{\ell}_i$ bits for the suffix. In contrast, the original scheme requires 6 bits for $\ell_i$ and $\ell_i$ bits for the suffix. The following lemma formalizes the advantage of the fixed-length approach:

\begin{lemma}[Fixed Bit Allocation for Unsigned Binary Suffix] \label{theo:upper_bound}
    For any $\beta_i \in \mathbb{Z}$, fixed allocation of $\bar{\ell}_i$ bits achieves better compression than variable allocation of $\ell_i$, i.e., $\mathbb{E}[(4 + \bar{\ell}_i)] < \mathbb{E}[(6 + \ell_i)]$.
\end{lemma}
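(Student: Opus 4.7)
The plan is to recast the inequality $\mathbb{E}[4+\bar\ell_i]<\mathbb{E}[6+\ell_i]$ as the equivalent form $\mathbb{E}[\bar\ell_i-\ell_i]<2$. Interpreted this way, the two bits saved by recording the coordinate spread $\delta=o_i-q_i$ in $4$ bits (Section~\ref{sssec:postprocessing}) rather than the explicit length $\ell_i$ in $6$ bits form a ``budget,'' and the lemma reduces to showing that the expected zero-padding overhead of the fixed-length encoding stays strictly within this budget.

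First I would establish a pointwise bound. Since $\delta=o_i-q_i$ and $|\beta_i|\in[10^{\delta-1},10^{\delta})$, we have $\bar\ell_i=\lceil\log_2 10^{\delta}\rceil$ and $\ell_i\ge\lceil\log_2(10^{\delta-1}+1)\rceil$, so $\bar\ell_i-\ell_i\le\lceil\log_2 10\rceil=4$, with the extremum attained only at the very bottom of the interval. Next I would average this padding under the natural assumption that $|\beta_i|$ is approximately uniform over the $9\cdot 10^{\delta-1}$ integers in $[10^{\delta-1},10^{\delta}-1]$. Partitioning this interval into the dyadic blocks $[2^{k-1},2^{k}-1]$ on which $\ell_i=k$ is constant yields a closed-form sum for $\mathbb{E}[\ell_i]$: the top block $[2^{\bar\ell_i-1},10^{\delta}-1]$ contributes zero padding and occupies a large share of the mass, while lower blocks contribute padding $\bar\ell_i-k$ with weights that shrink geometrically as $k$ decreases. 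A direct evaluation gives $\mathbb{E}[\bar\ell_i-\ell_i]\approx 0.7$--$1.3$ across all practical values of $\delta$, comfortably below the $2$-bit budget.

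The main obstacle is justifying the distributional assumption, since real suffix streams may skew toward either end of $[10^{\delta-1},10^{\delta})$ and could in principle inflate the average padding. I would address this in two ways: (i) observe that the dyadic argument is robust in the sense that it only needs the top block to receive a bounded-below share of the probability mass, so mild non-uniformity does not overturn the inequality; and (ii) complement the analytical bound with an empirical measurement of the average padding across the $22$ datasets in the evaluation (Section~\ref{sec:exp}), confirming that $\mathbb{E}[\bar\ell_i-\ell_i]<2$ holds in practice as well as in the uniform model. Combined with the pointwise cap $\bar\ell_i-\ell_i\le 4$, these pieces yield the claimed inequality.
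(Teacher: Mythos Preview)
Your approach is essentially the paper's: both rewrite the claim as $\mathbb{E}[\bar\ell_i-\ell_i]<2$, assume $|\beta_i|$ uniform on $[10^{\delta-1},10^{\delta})$, and partition that decade into dyadic blocks to compute $\mathbb{E}[\ell_i]$. The difference is in what is actually proved. The paper fixes $j$ as the least integer with $2^{j}>10^{\delta-1}$ and then splits into two cases according to whether $2^{j+3}>10^{\delta}$ (so $\bar\ell_i=j{+}3$) or $2^{j+3}\le 10^{\delta}$ (so $\bar\ell_i=j{+}4$); in each case it lower-bounds $\mathbb{E}[\ell_i]$ by coarsening $10^{\delta-1}\ge 2^{j-1}$ and $10^{\delta}\le 2^{j+3}$ (resp.\ $2^{j+4}$), obtaining $\mathbb{E}[\bar\ell_i-\ell_i-2]<-\tfrac{1}{6}$ and $<-\tfrac{1}{7}$, and then averages with case probabilities $\mathbb{P}_1=4-\log_2 10\approx 0.678$ and $\mathbb{P}_2\approx 0.322$ to get the final bound $\approx -0.159<0$. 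Your sketch skips this case split and simply asserts the numerical range $0.7$--$1.3$; that range is in fact consistent with the paper's (looser) analytic bounds, but as written you have not derived it.

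Your discussion of the distributional assumption goes beyond the paper: the paper adopts uniformity silently and does not argue robustness or invoke empirics. Your point (i) is too vague to stand as a proof step, and (ii) is not a proof; if you want to match the paper's level, carry out the two-case dyadic computation to a closed-form inequality rather than appealing to measurements.
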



\lcy{
We provide a proof sketch for $\mathbb{E}[(\bar{\ell}_i - \ell_i - 2)] < 0$.
The derivation conditions on $[10^{\delta-1},10^{\delta})$.
With $2^{j-1}<10^{\delta-1}\leq2^{j}<2^{j+1}<2^{j+2}<10^{\delta}$, we have $\delta\in[(j-1)\log_{10}2+1,j\log_{10}2+1]$. 
We distinguish two cases: the probability $\mathbb{P}_1\{2^{j+3} > 10^\delta\} \approx 0.6781$, and $\mathbb{P}_2 \{2^{j+3}\leq10^{\delta}\} \approx0.3219$.\lcyr{M2.(7)}
We know that $\bar{\ell}_i$ is constant over $[10^{\delta-1},10^{\delta})$; $\ell_i$ is a piecewise function, the expectation $\mathbb{E}_1=\mathbb{E}(\bar{\ell}_i \mid 2^{j+3}>10^{\delta})- \mathbb{E}(\ell_i \mid 2^{j+3}>10^{\delta})-2 < (j+3) -(j+\frac{7}{6}) -2= -\frac{1}{6}$. Similarly, $\mathbb{E}_2  = j+2 - \mathbb{E}(\ell_i \mid 2^{j+3}\leq10^{\delta}) < -\frac{1}{7}$. 
Finally, $\mathbb{E}[(\bar{\ell}_i - \ell_i - 2)] = \mathbb{E}_1 \times \mathbb{P}_1 + \mathbb{E}_2 \times \mathbb{P}_2< -\frac{1}{6} \times \mathbb{P}_1 - \frac{1}{7} \times \mathbb{P}_2 \approx -0.159 < 0$.
}

\begin{example}
    Referring to Figure~\ref{fig:compressor}, the \method{} compressor reduces the storage for the suffix $\beta_2 = 479$ from 16 bits (vanilla design: 1 bit for sign + 6 bits for $\ell_2 = 9$ and $\ell_2 = 9$ bits for $\beta_2$) to 
    10 bits (0 bit for $\alpha_i \neq 0$ + $\bar{\ell}_2 = 10$ fixed bits allocated for $\beta_2$ in Figure~\ref{fig:compressor}(b)).
    Even after accounting for the 4 bits overhead for $\delta$ already recorded in the \method{} converter, the optimization yields a net improvement of 2 bits.
\end{example}


\subsection{Accelerated Decompression}
\label{ssec:decomp}

The decompression process extracts and parses segments from the compressed bit stream. For each new value, the \method{} decompressor reads a two-bit control code(see Section~\ref{ssec:converter}). If the exception handler is not triggered, $q_i$ and $o_i$ are reused or derived from the metadata bits: $q_i$ scales back the original suffix $\beta'_i$, while $o_i$ identifies the LCP $\alpha_i$. The final value is reconstructed by combining $\alpha_i$ and $\beta'_i$.
\lcyr{M2.(2)}\lcy{
When the control code is \onebit{}\zerobit{} and $q_i = q_{i-1} \land o_i = o_{i-1}$, we simply reuse the previous LCP ($\alpha_i = \alpha_{i-1}$), avoiding additional computation and accelerating the decompression. In Section~\ref{ssec:fundamental}, we examine the efficiency of this scheme.
}

\section{The \method{} Exception Handler}
\label{sec:Exception}

Referring back to Figure~\ref{fig:pilot}, all converters, including \DXOR{}, perform poorly on high-precision datasets (e.g., \textsf{AS} and \textsf{PA}), often matching or underperforming compared to smoothness-based \XOR{} converters. This is because high-precision data reduces precision redundancy, weakening the performance of redundancy-based techniques, including \DXOR{}.

\begin{example}
    High-precision datasets exhibit minimal smoothness and precision redundancy, leading to longer suffixes in \method{} main pipeline. E.g., with $\delta = o_i - q_i = 16$, the required suffix length $\bar{\ell}_i$ is 54 bits, exceeding the 52 bits used by the $\mathit{fraction}$ in the IEEE754 format (see Equation~\ref{equ:ieee754}). 
    \lcyr{M1.(3)}\lcy{Such data leaves precision-based methods such as \Elf{} with minimal opportunities for erasure and pushes \ALP{} to overflow.}
\end{example} 

While \XOR{} converters rely solely on smoothness and remain relatively robust, their performance is still suboptimal for high-precision data (CBL $\geq 38$) due to limited inherent smoothness. To address this, the proposed exception handler focuses on enhancing smoothness utilization specifically for high-precision cases.
As shown in Figure~\ref{fig:exception}(a), the exception handler targets the exponent (\textit{exp}) in the IEEE754 format. It performs subtraction on adjacent exponents, $\mathit{exp}_i - \mathit{exp}_{i-1}$ (red line), to enable data reuse. The \textit{sign} and \textit{fraction} fields are retained as-is (gray lines) due to their limited smoothness.  
To optimize storage further, an adaptive storage mechanism dynamically adjusts storage space for the exponential subtraction based on the streaming data. 
The subtraction and adaptive storage are detailed in Sections~\ref{ssec:ES} and~\ref{ssec:elastic}, respectively.

\begin{figure}
    \centering
    \includegraphics[width=\columnwidth]{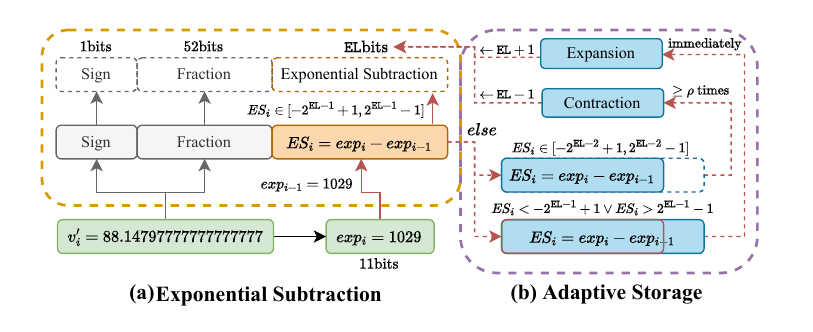}
    \caption{The data flow of the exception handler ($\mathit{exp}_i = 1029$ with the predefined $\mathit{bias} = 1023$); $\mathtt{EL}$ is an adjustable variable.}
    \label{fig:exception}
\end{figure}

\subsection{Exponential Subtraction}
\label{ssec:ES}

In the IEEE754 representation (see Table~\ref{tab:converter-methods}), the fraction part $\mathit{fraction}$ typically lacks smoothness, while the exponent $\mathit{exp}$ contains more stable and meaningful bits.

We summarize the key properties of the exponent stored in the IEEE754 representation:
(1) \emph{wide range of shared values}: A single exponent value represents all real numbers within a given range. For example, if $\mathit{exp} - \mathit{bias} = 6$ (see Equation~\ref{equ:ieee754_reconstruct}), it represents all values in $[2^6, 2^7) = [64, 128)$, regardless of precision.
(2) \emph{low volatility}: exponent values exhibit infrequent changes along the stream.
(3) \emph{dense lower bits}: the lower bits of an exponent's binary representation are compact because exponents are all integers.


Given these properties, the exponent is ideal for compression and forms the foundation for the adaptive storage strategy detailed in Section~\ref{ssec:elastic}. Subtraction is used instead of \XOR{} between consecutive exponents, as subtraction often yields shorter CBLs in high-precision datasets.
\lcyr{M1.(5)}
%
For high-precision data, the IEEE754 sign bit and fraction are stored in their original binary form, while compression focuses exclusively on the exponent, the most cost-effective component for capturing smoothness.

From these findings, we define \textbf{exponential subtraction} as $\mathit{ES}_i = \mathit{exp}_i - \mathit{exp}_{i-1}$. This result is then processed by the adaptive storage module. 
The required length to store $\mathit{ES}_i$ is defined as:
$\ell^e_i = 1 + \lceil log_2(\operatorname{abs}(\mathit{ES}_i) +1) \rceil$.
Unlike $\ell_i$ defined in Section~\ref{sssec:binary_suffix_length}, $\ell^e_i$ accounts for 1 extra bit to handle the sign, as $\mathit{ES}_i$ can be negative.

\begin{example}
    Consider the high-precision number $88.147977\cdots$ (11 trailing sevens) in Figure~\ref{fig:exception}, where $\mathit{exp}_i - \mathit{bias} = 6$.
    Typically, the preceding value $\mathit{exp}_{i-1}$ equals $\mathit{exp}_i$, resulting in $\mathit{ES}_i = 0$. This yields a required bit length of $\ell^e_i = 1 + \log_2(\lceil 0 + 1 \rceil) = 1$ for $\mathit{ES}_i = 0$.
    In contrast, encoding $\mathit{exp}_i$ with variable-length encoding requires 4 bits (1 bit for the sign and 3 bits for $(\mathit{exp}_i - \mathit{bias}) = 6$).
\end{example}

\subsection{Adaptive Storage of Subtraction Results}
\label{ssec:elastic}

Compressing only the exponent offers limited savings, as the exponent requires just 11 bits. Additionally, a 4-bit metadata overhead is needed to represent the length of $\mathit{ES}_i$, which ranges from 1 to 11 bits, further reducing compression ratios.

\begin{example}
\label{example:store_ES}
    Consider compressing three exponential subtraction results: $\mathit{ES}_1 = 3$, $\mathit{ES}_2 = 1$, and $\mathit{ES}_3 = 1$. They originate from four distinct exceptional values, which may not be contiguous in the original sequence. Their required lengths are $\ell^e_1 = 3$ and $\ell^e_2 = \ell^e_3 = 2$.
    With a fixed metadata approach, 4 bits are needed for each $\ell^e_i$, plus 53 bits for the residual sign and fraction, resulting in a fixed overhead of 57 bits per entry.
    The total storage cost is: $\ell^e_1 + \ell^e_2 + \ell^e_3 + 57 \times 3 = 178$ bits.
\end{example}

We propose an adaptive storage mechanism that dynamically adjusts the storage length $\mathtt{EL}$ based on historical data, fitting $\mathit{ES}_i$ \emph{elastically}. 
The storage length $\mathtt{EL}$ is maintained as a local variable in the main memory during compression and decompression.
$\mathtt{EL}$ varies from 1 to 12 bits, sufficient to cover any $\mathit{ES}_i \in [-2047, 2047]$\footnote{This range is required for IEEE754 \texttt{double} format, where $\mathit{exp}_i \in [0, 2047]$~\cite{IEEE754_2019}.}.

Given the current $\mathtt{EL}$, the range $[0, 2^{\mathtt{EL}} - 1]$ is used to store integers (i.e., $\mathit{ES}_i$), with the maximum representable value $2^{\mathtt{EL}} - 1$ reserved to indicate overflow (described below), reducing the usable range by 1.
Since $\mathit{ES}_i$ can be negative, a bias of $2^{\mathtt{EL}-1} - 1$ is added to offset the sign of $\mathit{ES}_i$. 
The actual stored value becomes: $\mathit{ES}_i + 2^{\mathtt{EL}-1} - 1$, and the valid range of the original $\mathit{ES}_i$ is $[-(2^{\mathtt{EL}-1} - 1), 2^{\mathtt{EL}-1} - 1]$.

After storing the current $\mathit{ES}_i$, $\mathtt{EL}$ is dynamically adjusted for the next $\mathit{ES}_{i+1}$ using two operations, as shown in Figure~\ref{fig:exception}(b):
\begin{itemize}[leftmargin=*]
\item \textbf{Expansion}: If $\mathit{ES}_i$ is outside the valid range defined by $\mathtt{EL}$, the current $\mathtt{EL}$-length space is filled with \onebit{}s, representing the maximum representable value $2^{\mathtt{EL}} - 1$. 
This is an \textbf{overflow} case that triggers the storage of the 64-bit IEEE754 representation. 
In response, $\mathtt{EL}$ is increased by 1 for the next entry $\mathit{ES}_{i+1}$ (hopefully addressing potential overflow).
    
\item \textbf{Contraction}: If $\mathit{ES}_i$ consistently fits within a smaller range ($\mathit{ES}_i \in [-(2^{\mathtt{EL}-2}-1), 2^{\mathtt{EL}-2}-1]$) for more than $\rho$ consecutive times (with interruptions resetting the count), $\mathtt{EL}$ is reduced by 1 (but never below 1), as a smaller $\mathtt{EL}$ is likely sufficient.
A small $\rho$ risks premature contraction, while a large $\rho$ delays adjustment. We use $\rho = 8$ by default, and Section~\ref{ssec:Parameter} analyzes its impact on high-precision datasets. Note that $\mathtt{EL}$ remains unchanged if the $\rho$-parameterized condition is not met.
\end{itemize}

\begin{example}\label{example:store_ES_adaptive}
    Continuing from Example~\ref{example:store_ES}, using adaptive storage, we attempt to store $\mathit{ES}_1 = 3$ with an initial length of $\mathtt{EL} = 1$.
    Since $3 > 2^{1}-1$, it triggers an overflow condition (all bits in $\mathtt{EL}$ are \onebit{}).
    The total storage required is 65 bits (1 for $\mathtt{EL}$ to indicate overflow + 64-bit original IEEE754 representation).
    
    After the overflow, $\mathtt{EL}$ is expanded to 2, which is sufficient to store $\mathit{ES}_2 \leq 2^{2-1}-1$. 
    This results in a storage cost of 55 bits ($\mathtt{EL} = 2$ to store $\mathit{ES}_2$ + 53 bits for residual sign and fraction).
    Similarly, $\mathit{ES}_3$ also requires 55 bits, as no further expansion is needed. The total storage for these three results is $65 + 55 + 55 = 175$ bits.  
\end{example}

Although this method requires a warm-up phase, $\mathtt{EL}$ quickly stabilizes at the optimal value ($\mathtt{EL} \rightarrow \ell^e_i$), allowing the scheme to approach the theoretical compression limit without storing any fixed overhead. Additionally, the gain of Example~\ref{example:store_ES_adaptive} over Example~\ref{example:store_ES} becomes more pronounced as the overhead is amortized across more $\mathit{ES}_i$.

The only drawback is that the maximum compression gain is capped at 11 bits. However, in most extreme cases with $\delta = 16$, the main pipeline requires at least 56 bits, increasing with precision. In contrast, the adaptive storage consistently achieves 56-bit storage (the pipeline's lower bound) regardless of precision, even for high-precision data.
Given this, our exception handling process is triggered when $\delta = o_i - q_i > 15$, with $q_i$ and $o_i$ provided by the converter (see Section~\ref{ssec:converter}), for maximum compression ratios.

\subsection{Extended Use of the Exception Handler}
\label{ssec:exception_usage}

The exception handler not only addresses inefficiencies in the main pipeline but also handles exceptions caused by rounding errors in binary-decimal conversions (see Section~\ref{ssec:problem}).
In Section~\ref{ssec:converter}, we introduce a tolerance for errors within $\Delta = 10^{-6}$, eliminating most rounding inaccuracies. However, two exceptions remain:
\begin{enumerate}[leftmargin=*]
    \item \textbf{Misclassification of Correct Results}: For example, a value $19.0000005$ with a very small residual may be incorrectly treated as $19$ when calculating the tail coordinate $q$, resulting in $q = 0$ instead of its true value $q = -7$. This causes lower coordinate bits to be discarded, and the decompressed value becomes $19$.

    \item \textbf{Rounding Errors During Decompression}: When reconstructing values using $v'_i = \alpha_i + \beta_i \times 10^{q_i}$, rounding errors, from the scaling operation, may occur even if $\alpha_i$, $\beta_i$, and $q_i$ are correct, leading to slight discrepancies from the original data.
\end{enumerate}

While these errors are extremely rare (less than 0.01\% in benchmark datasets), they still require special handling.
The exception handler operates entirely in binary, ensuring reliability and enabling full restoration of the original binary representation. Though it cannot match the high compression ratio of the main pipeline, it guarantees decompression accuracy with negligible impact on overall compression due to the rarity of anomalies.

The corresponding condition is given as $v'_i \neq \alpha_i+\beta_i \times 10^{q_i}$. 
This is combined with the condition $o_i - q_i > 15$ (Section~\ref{ssec:elastic}) to immediately identify exceptional cases after Algorithm~\ref{alg:precondition}.

Typically, dataset precision is known and consistent, making exception cases predictable. If the precision is known or consecutive exceptions exceed a threshold, we skip storing the case code and default to using the exception handler alone for compression.
\lcyr{M2.(4)}\lcy{
This trims a few percent of bits and nearly doubles the compression speed on high-precision, non-time-series datasets, yet the full \method{} still achieves the best compression on time-series sets.  
As this ``prior-knowledge'' mode yields an unfair advantage, Section~\ref{sec:exp} disables it and reports only on the precision-agnostic configuration.}
\begin{table*}[t]
\caption{Overall comparisons of ACB ($\downarrow$), compression speed (unit: MB/s, $\uparrow$), and decompression speed (unit: MB/s, $\uparrow$). Data precision $\mathit{dp} = p - q + 1$ reflects the average information per value. Datasets are divided into low-$\mathit{dp}$ ($\mathit{dp} \leq 7$) (marked in \textbf{\colorbox{lowdpcolor}{blue}}) and high-$\mathit{dp}$ (in \textbf{\colorbox{highdpcolor}{red}}). \Camel{}~\cite{camel} works losslessly only in low-$\mathit{dp}$ datasets.}
\label{tab:ACB}
\centering
\setlength{\tabcolsep}{1pt} 
\renewcommand{\arraystretch}{0.8} 
\begin{adjustbox}{width=\textwidth} 
\begin{tabular}{@{}cccccccccccccccccccccccccc@{}}
\hline
\multicolumn{2}{|c|}{} & \multicolumn{14}{c|}{\textbf{Time-Series Datasets with Ascending $\mathit{dp}$}} & \multicolumn{8}{c|}{\textbf{Non-Time-Series Datasets with Ascending $\mathit{dp}$}} & \multicolumn{2}{c|}{\textbf{GEOMEAN}} \\ \cline{3-26} 
\multicolumn{2}{|c|}{\multirow{-2}{*}{Datasets}} & \multicolumn{1}{c|}{\cellcolor[HTML]{DDDDFF}{\textsf{WS}}} & \multicolumn{1}{c|}{\cellcolor[HTML]{DDDDFF}{\textsf{PM}}} & \multicolumn{1}{c|}{\cellcolor[HTML]{DDDDFF}{\textsf{CT}}} & \multicolumn{1}{c|}{\cellcolor[HTML]{DDDDFF}{\textsf{IR}}} & \multicolumn{1}{c|}{\cellcolor[HTML]{DDDDFF}{\textsf{DPT}}} & \multicolumn{1}{c|}{\cellcolor[HTML]{DDDDFF}{\textsf{SUSA}}} & \multicolumn{1}{c|}{\cellcolor[HTML]{DDDDFF}{\textsf{SUK}}} & \multicolumn{1}{c|}{\cellcolor[HTML]{DDDDFF}{\textsf{SDE}}} & \multicolumn{1}{c|}{\cellcolor[HTML]{FFCCCC}{\textsf{AP}}} & \multicolumn{1}{c|}{\cellcolor[HTML]{FFCCCC}{\textsf{BM}}} & \multicolumn{1}{c|}{\cellcolor[HTML]{FFCCCC}{\textsf{BW}}} & \multicolumn{1}{c|}{\cellcolor[HTML]{FFCCCC}{\textsf{BT}}} & \multicolumn{1}{c|}{\cellcolor[HTML]{FFCCCC}{\textsf{BP}}} & \multicolumn{1}{c|}{\cellcolor[HTML]{FFCCCC}{\textsf{AS}}} & \multicolumn{1}{c|}{\cellcolor[HTML]{DDDDFF}{\textsf{FP}}} & \multicolumn{1}{c|}{\cellcolor[HTML]{DDDDFF}{\textsf{EVC}}} & \multicolumn{1}{c|}{\cellcolor[HTML]{DDDDFF}{\textsf{SSD}}} & \multicolumn{1}{c|}{\cellcolor[HTML]{DDDDFF}{\textsf{BL}}} & \multicolumn{1}{c|}{\cellcolor[HTML]{DDDDFF}{\textsf{CA}}} & \multicolumn{1}{c|}{\cellcolor[HTML]{DDDDFF}{\textsf{CO}}} & \multicolumn{1}{c|}{\cellcolor[HTML]{FFCCCC}{\textsf{PA}}} & \multicolumn{1}{c|}{\cellcolor[HTML]{FFCCCC}{\textsf{PO}}} & \multicolumn{1}{c|}{FULL} & \multicolumn{1}{c|}{low-$dp$} \\ \hline
\multicolumn{26}{c}{\vspace{-1em}} \\ \hline
\multicolumn{1}{|c|}{} & \multicolumn{1}{c|}{\Gorilla{}} & \multicolumn{1}{c|}{52.77} & \multicolumn{1}{c|}{32.12} & \multicolumn{1}{c|}{46.34} & \multicolumn{1}{c|}{39.73} & \multicolumn{1}{c|}{53.71} & \multicolumn{1}{c|}{43.85} & \multicolumn{1}{c|}{35.65} & \multicolumn{1}{c|}{46.18} & \multicolumn{1}{c|}{45.93} & \multicolumn{1}{c|}{49.28} & \multicolumn{1}{c|}{59.79} & \multicolumn{1}{c|}{58.71} & \multicolumn{1}{c|}{53.10} & \multicolumn{1}{c|}{{\ul 53.06}} & \multicolumn{1}{c|}{32.26} & \multicolumn{1}{c|}{59.33} & \multicolumn{1}{c|}{37.24} & \multicolumn{1}{c|}{44.82} & \multicolumn{1}{c|}{63.83} & \multicolumn{1}{c|}{68.02} & \multicolumn{1}{c|}{61.94} & \multicolumn{1}{c|}{68.31} & \multicolumn{1}{c|}{49.09} & \multicolumn{1}{c|}{46.82} \\
\multicolumn{1}{|c|}{} & \multicolumn{1}{c|}{\Chimp} & \multicolumn{1}{c|}{51.94} & \multicolumn{1}{c|}{33.31} & \multicolumn{1}{c|}{45.75} & \multicolumn{1}{c|}{43.36} & \multicolumn{1}{c|}{56.69} & \multicolumn{1}{c|}{49.52} & \multicolumn{1}{c|}{43.31} & \multicolumn{1}{c|}{52.18} & \multicolumn{1}{c|}{56.76} & \multicolumn{1}{c|}{54.58} & \multicolumn{1}{c|}{60.13} & \multicolumn{1}{c|}{59.96} & \multicolumn{1}{c|}{58.59} & \multicolumn{1}{c|}{58.33} & \multicolumn{1}{c|}{34.05} & \multicolumn{1}{c|}{57.99} & \multicolumn{1}{c|}{34.99} & \multicolumn{1}{c|}{44.57} & \multicolumn{1}{c|}{61.71} & \multicolumn{1}{c|}{64.20} & \multicolumn{1}{c|}{{\ul 60.75}} & \multicolumn{1}{c|}{{\ul 64.60}} & \multicolumn{1}{c|}{51.18} & \multicolumn{1}{c|}{48.31} \\
\multicolumn{1}{|c|}{} & \multicolumn{1}{c|}{\Elf{}} & \multicolumn{1}{c|}{16.44} & \multicolumn{1}{c|}{13.12} & \multicolumn{1}{c|}{20.78} & \multicolumn{1}{c|}{18.09} & \multicolumn{1}{c|}{27.38} & \multicolumn{1}{c|}{24.78} & \multicolumn{1}{c|}{24.25} & \multicolumn{1}{c|}{27.10} & \multicolumn{1}{c|}{36.81} & \multicolumn{1}{c|}{36.12} & \multicolumn{1}{c|}{39.33} & \multicolumn{1}{c|}{39.81} & \multicolumn{1}{c|}{43.92} & \multicolumn{1}{c|}{62.73} & \multicolumn{1}{c|}{17.82} & \multicolumn{1}{c|}{24.20} & \multicolumn{1}{c|}{18.24} & \multicolumn{1}{c|}{23.84} & \multicolumn{1}{c|}{36.53} & \multicolumn{1}{c|}{40.19} & \multicolumn{1}{c|}{64.39} & \multicolumn{1}{c|}{69.45} & \multicolumn{1}{c|}{29.72} & \multicolumn{1}{c|}{23.18} \\
\multicolumn{1}{|c|}{} & \multicolumn{1}{c|}{\Elfp} & \multicolumn{1}{c|}{14.27} & \multicolumn{1}{c|}{{\ul 9.70}} & \multicolumn{1}{c|}{18.35} & \multicolumn{1}{c|}{14.91} & \multicolumn{1}{c|}{24.32} & \multicolumn{1}{c|}{21.86} & \multicolumn{1}{c|}{22.82} & \multicolumn{1}{c|}{25.29} & \multicolumn{1}{c|}{{\ul 33.37}} & \multicolumn{1}{c|}{{\ul 34.61}} & \multicolumn{1}{c|}{{\ul 37.80}} & \multicolumn{1}{c|}{{\ul 37.21}} & \multicolumn{1}{c|}{{\ul 40.97}} & \multicolumn{1}{c|}{63.69} & \multicolumn{1}{c|}{{\ul 17.30}} & \multicolumn{1}{c|}{21.58} & \multicolumn{1}{c|}{{\ul 16.24}} & \multicolumn{1}{c|}{21.21} & \multicolumn{1}{c|}{34.56} & \multicolumn{1}{c|}{39.49} & \multicolumn{1}{c|}{65.34} & \multicolumn{1}{c|}{70.36} & \multicolumn{1}{c|}{{\ul 27.31}} & \multicolumn{1}{c|}{20.57} \\
\multicolumn{1}{|c|}{} & \multicolumn{1}{c|}{\Camel{}} & \multicolumn{1}{c|}{{\ul 10.36}} & \multicolumn{1}{c|}{10.79} & \multicolumn{1}{c|}{{\ul 11.47}} & \multicolumn{1}{c|}{{\ul 11.59}} & \multicolumn{1}{c|}{\textbf{12.09}} & \multicolumn{1}{c|}{{\ul 11.45}} & \multicolumn{1}{c|}{\textbf{10.18}} & \multicolumn{1}{c|}{{\ul 13.58}} & \multicolumn{1}{c|}{/} & \multicolumn{1}{c|}{/} & \multicolumn{1}{c|}{/} & \multicolumn{1}{c|}{/} & \multicolumn{1}{c|}{/} & \multicolumn{1}{c|}{/} & \multicolumn{1}{c|}{/} & \multicolumn{1}{c|}{\textbf{13.85}} & \multicolumn{1}{c|}{20.01} & \multicolumn{1}{c|}{{\ul 19.97}} & \multicolumn{1}{c|}{{\ul 31.94}} & \multicolumn{1}{c|}{{\ul 34.41}} & \multicolumn{1}{c|}{/} & \multicolumn{1}{c|}{/} & \multicolumn{1}{c|}{/} & \multicolumn{1}{c|}{{\ul 14.86}} \\
\multicolumn{1}{|c|}{\multirow{-6}{*}{\rotatebox{90}{ACB}}} & \multicolumn{1}{c|}{\cellcolor[HTML]{FFEEC4}\method{}} & \multicolumn{1}{c|}{\cellcolor[HTML]{FFEEC4}\textbf{10.35}} & \multicolumn{1}{c|}{\cellcolor[HTML]{FFEEC4}\textbf{7.12}} & \multicolumn{1}{c|}{\cellcolor[HTML]{FFEEC4}\textbf{11.33}} & \multicolumn{1}{c|}{\cellcolor[HTML]{FFEEC4}\textbf{8.01}} & \multicolumn{1}{c|}{\cellcolor[HTML]{FFEEC4}{\ul 13.22}} & \multicolumn{1}{c|}{\cellcolor[HTML]{FFEEC4}\textbf{9.71}} & \multicolumn{1}{c|}{\cellcolor[HTML]{FFEEC4}{\ul 11.59}} & \multicolumn{1}{c|}{\cellcolor[HTML]{FFEEC4}\textbf{12.22}} & \multicolumn{1}{c|}{\cellcolor[HTML]{FFEEC4}\textbf{14.87}} & \multicolumn{1}{c|}{\cellcolor[HTML]{FFEEC4}\textbf{19.47}} & \multicolumn{1}{c|}{\cellcolor[HTML]{FFEEC4}\textbf{30.67}} & \multicolumn{1}{c|}{\cellcolor[HTML]{FFEEC4}\textbf{29.25}} & \multicolumn{1}{c|}{\cellcolor[HTML]{FFEEC4}\textbf{25.89}} & \multicolumn{1}{c|}{\cellcolor[HTML]{FFEEC4}\textbf{52.28}} & \multicolumn{1}{c|}{\cellcolor[HTML]{FFEEC4}\textbf{12.38}} & \multicolumn{1}{c|}{\cellcolor[HTML]{FFEEC4}{\ul 14.13}} & \multicolumn{1}{c|}{\cellcolor[HTML]{FFEEC4}\textbf{13.27}} & \multicolumn{1}{c|}{\cellcolor[HTML]{FFEEC4}\textbf{15.00}} & \multicolumn{1}{c|}{\cellcolor[HTML]{FFEEC4}\textbf{24.76}} & \multicolumn{1}{c|}{\cellcolor[HTML]{FFEEC4}\textbf{26.53}} & \multicolumn{1}{c|}{\cellcolor[HTML]{FFEEC4}\textbf{57.86}} & \multicolumn{1}{c|}{\cellcolor[HTML]{FFEEC4}\textbf{58.70}} & \multicolumn{1}{c|}{\cellcolor[HTML]{FFEEC4}\textbf{17.82}} & \multicolumn{1}{c|}{\cellcolor[HTML]{FFEEC4}\textbf{12.70}} \\ 
\hline

\multicolumn{26}{c}{\vspace{-1em}} \\ \hline
\multicolumn{1}{|c|}{} & \multicolumn{1}{c|}{\Gorilla{}} & \multicolumn{1}{c|}{10.06} & \multicolumn{1}{c|}{34.37} & \multicolumn{1}{c|}{9.77} & \multicolumn{1}{c|}{34.42} & \multicolumn{1}{c|}{3.55} & \multicolumn{1}{c|}{18.19} & \multicolumn{1}{c|}{30.80} & \multicolumn{1}{c|}{17.69} & \multicolumn{1}{c|}{28.78} & \multicolumn{1}{c|}{2.97} & \multicolumn{1}{c|}{3.50} & \multicolumn{1}{c|}{{\ul 25.95}} & \multicolumn{1}{c|}{{\ul 21.16}} & \multicolumn{1}{c|}{{\ul 23.17}} & \multicolumn{1}{c|}{\textbf{22.68}} & \multicolumn{1}{c|}{{\ul 6.81}} & \multicolumn{1}{c|}{{\ul 11.66}} & \multicolumn{1}{c|}{30.46} & \multicolumn{1}{c|}{25.45} & \multicolumn{1}{c|}{7.41} & \multicolumn{1}{c|}{7.65} & \multicolumn{1}{c|}{\textbf{12.16}} & \multicolumn{1}{c|}{13.94} & \multicolumn{1}{c|}{14.95} \\
\multicolumn{1}{|c|}{} & \multicolumn{1}{c|}{\Chimp} & \multicolumn{1}{c|}{13.20} & \multicolumn{1}{c|}{{\ul 35.01}} & \multicolumn{1}{c|}{10.58} & \multicolumn{1}{c|}{15.13} & \multicolumn{1}{c|}{16.25} & \multicolumn{1}{c|}{35.63} & \multicolumn{1}{c|}{{\ul 34.60}} & \multicolumn{1}{c|}{\textbf{26.57}} & \multicolumn{1}{c|}{31.80} & \multicolumn{1}{c|}{2.25} & \multicolumn{1}{c|}{{\ul 5.35}} & \multicolumn{1}{c|}{20.34} & \multicolumn{1}{c|}{\textbf{28.19}} & \multicolumn{1}{c|}{16.57} & \multicolumn{1}{c|}{{\ul 21.51}} & \multicolumn{1}{c|}{\textbf{25.50}} & \multicolumn{1}{c|}{0.56} & \multicolumn{1}{c|}{31.92} & \multicolumn{1}{c|}{{\ul 26.26}} & \multicolumn{1}{c|}{{\ul 30.09}} & \multicolumn{1}{c|}{\textbf{23.01}} & \multicolumn{1}{c|}{{\ul 7.91}} & \multicolumn{1}{c|}{\textbf{15.69}} & \multicolumn{1}{c|}{17.47} \\
\multicolumn{1}{|c|}{} & \multicolumn{1}{c|}{\Elf{}} & \multicolumn{1}{c|}{10.94} & \multicolumn{1}{c|}{13.36} & \multicolumn{1}{c|}{\textbf{35.11}} & \multicolumn{1}{c|}{36.87} & \multicolumn{1}{c|}{\textbf{18.11}} & \multicolumn{1}{c|}{36.20} & \multicolumn{1}{c|}{18.45} & \multicolumn{1}{c|}{22.77} & \multicolumn{1}{c|}{32.79} & \multicolumn{1}{c|}{2.89} & \multicolumn{1}{c|}{{\textbf{6.67}}} & \multicolumn{1}{c|}{19.96} & \multicolumn{1}{c|}{1.58} & \multicolumn{1}{c|}{\textbf{26.90}} & \multicolumn{1}{c|}{4.14} & \multicolumn{1}{c|}{5.07} & \multicolumn{1}{c|}{1.12} & \multicolumn{1}{c|}{{\ul 34.45}} & \multicolumn{1}{c|}{4.98} & \multicolumn{1}{c|}{10.02} & \multicolumn{1}{c|}{9.90} & \multicolumn{1}{c|}{6.37} & \multicolumn{1}{c|}{10.93} & \multicolumn{1}{c|}{13.53} \\
\multicolumn{1}{|c|}{} & \multicolumn{1}{c|}{\Elfp} & \multicolumn{1}{c|}{{\ul 17.01}} & \multicolumn{1}{c|}{10.85} & \multicolumn{1}{c|}{11.16} & \multicolumn{1}{c|}{{\ul 39.70}} & \multicolumn{1}{c|}{{\ul 17.63}} & \multicolumn{1}{c|}{{\ul 39.74}} & \multicolumn{1}{c|}{31.35} & \multicolumn{1}{c|}{{\ul 26.23}} & \multicolumn{1}{c|}{\textbf{36.51}} & \multicolumn{1}{c|}{\textbf{4.12}} & \multicolumn{1}{c|}{4.77} & \multicolumn{1}{c|}{\textbf{33.40}} & \multicolumn{1}{c|}{20.86} & \multicolumn{1}{c|}{1.22} & \multicolumn{1}{c|}{21.01} & \multicolumn{1}{c|}{6.15} & \multicolumn{1}{c|}{7.82} & \multicolumn{1}{c|}{\textbf{38.34}} & \multicolumn{1}{c|}{{\ul 29.13}} & \multicolumn{1}{c|}{\textbf{30.79}} & \multicolumn{1}{c|}{{\ul 19.16}} & \multicolumn{1}{c|}{4.81} & \multicolumn{1}{c|}{{\ul 15.14}} & \multicolumn{1}{c|}{{\ul 19.92}} \\
\multicolumn{1}{|c|}{} & \multicolumn{1}{c|}{\Camel{}} & \multicolumn{1}{c|}{8.46} & \multicolumn{1}{c|}{0.47} & \multicolumn{1}{c|}{6.28} & \multicolumn{1}{c|}{15.25} & \multicolumn{1}{c|}{2.98} & \multicolumn{1}{c|}{9.61} & \multicolumn{1}{c|}{10.44} & \multicolumn{1}{c|}{4.21} & \multicolumn{1}{c|}{/} & \multicolumn{1}{c|}{/} & \multicolumn{1}{c|}{/} & \multicolumn{1}{c|}{/} & \multicolumn{1}{c|}{/} & \multicolumn{1}{c|}{/} & \multicolumn{1}{c|}{/} & \multicolumn{1}{c|}{2.75} & \multicolumn{1}{c|}{0.34} & \multicolumn{1}{c|}{3.66} & \multicolumn{1}{c|}{21.46} & \multicolumn{1}{c|}{3.97} & \multicolumn{1}{c|}{/} & \multicolumn{1}{c|}{/} & \multicolumn{1}{c|}{/} & \multicolumn{1}{c|}{4.23} \\
\multicolumn{1}{|c|}{\multirow{-6}{*}{\rotatebox{90}{Com. Speed}}} & \multicolumn{1}{c|}{\cellcolor[HTML]{FFEEC4}\method{}} & \multicolumn{1}{c|}{\cellcolor[HTML]{FFEEC4}\textbf{23.86}} & \multicolumn{1}{c|}{\cellcolor[HTML]{FFEEC4}\textbf{37.08}} & \multicolumn{1}{c|}{\cellcolor[HTML]{FFEEC4}{\ul 32.88}} & \multicolumn{1}{c|}{\cellcolor[HTML]{FFEEC4}\textbf{51.57}} & \multicolumn{1}{c|}{\cellcolor[HTML]{FFEEC4}16.74} & \multicolumn{1}{c|}{\cellcolor[HTML]{FFEEC4}\textbf{49.63}} & \multicolumn{1}{c|}{\cellcolor[HTML]{FFEEC4}\textbf{37.71}} & \multicolumn{1}{c|}{\cellcolor[HTML]{FFEEC4}19.60} & \multicolumn{1}{c|}{\cellcolor[HTML]{FFEEC4}{\ul 33.22}} & \multicolumn{1}{c|}{\cellcolor[HTML]{FFEEC4}{\ul 3.61}} & \multicolumn{1}{c|}{\cellcolor[HTML]{FFEEC4}4.65} & \multicolumn{1}{c|}{\cellcolor[HTML]{FFEEC4}21.44} & \multicolumn{1}{c|}{\cellcolor[HTML]{FFEEC4}13.49} & \multicolumn{1}{c|}{\cellcolor[HTML]{FFEEC4}7.45} & \multicolumn{1}{c|}{\cellcolor[HTML]{FFEEC4}15.89} & \multicolumn{1}{c|}{\cellcolor[HTML]{FFEEC4}6.77} & \multicolumn{1}{c|}{\cellcolor[HTML]{FFEEC4}\textbf{19.00}} & \multicolumn{1}{c|}{\cellcolor[HTML]{FFEEC4}21.82} & \multicolumn{1}{c|}{\cellcolor[HTML]{FFEEC4}19.93} & \multicolumn{1}{c|}{\cellcolor[HTML]{FFEEC4}17.49} & \multicolumn{1}{c|}{\cellcolor[HTML]{FFEEC4}2.14} & \multicolumn{1}{c|}{\cellcolor[HTML]{FFEEC4}1.86} & \multicolumn{1}{c|}{\cellcolor[HTML]{FFEEC4}14.94} & \multicolumn{1}{c|}{\cellcolor[HTML]{FFEEC4}\textbf{24.05}} \\ 
\hline

\multicolumn{26}{c}{\vspace{-1em}} \\ \hline
\multicolumn{1}{|c|}{} & \multicolumn{1}{c|}{\Gorilla{}} & \multicolumn{1}{c|}{60.61} & \multicolumn{1}{c|}{{\ul 78.96}} & \multicolumn{1}{c|}{52.20} & \multicolumn{1}{c|}{72.09} & \multicolumn{1}{c|}{7.76} & \multicolumn{1}{c|}{{\ul 62.94}} & \multicolumn{1}{c|}{48.26} & \multicolumn{1}{c|}{44.95} & \multicolumn{1}{c|}{{\ul 58.26}} & \multicolumn{1}{c|}{8.37} & \multicolumn{1}{c|}{{\ul 37.97}} & \multicolumn{1}{c|}{48.03} & \multicolumn{1}{c|}{\textbf{53.97}} & \multicolumn{1}{c|}{{\ul 52.49}} & \multicolumn{1}{c|}{29.12} & \multicolumn{1}{c|}{21.72} & \multicolumn{1}{c|}{{\ul 40.45}} & \multicolumn{1}{c|}{{\ul 66.49}} & \multicolumn{1}{c|}{54.81} & \multicolumn{1}{c|}{48.25} & \multicolumn{1}{c|}{42.46} & \multicolumn{1}{c|}{\textbf{40.66}} & \multicolumn{1}{c|}{41.39} & \multicolumn{1}{c|}{44.82} \\
\multicolumn{1}{|c|}{} & \multicolumn{1}{c|}{\Chimp} & \multicolumn{1}{c|}{{\ul 64.87}} & \multicolumn{1}{c|}{\textbf{81.76}} & \multicolumn{1}{c|}{\textbf{69.16}} & \multicolumn{1}{c|}{73.51} & \multicolumn{1}{c|}{8.29} & \multicolumn{1}{c|}{46.08} & \multicolumn{1}{c|}{21.46} & \multicolumn{1}{c|}{{\ul 58.53}} & \multicolumn{1}{c|}{52.25} & \multicolumn{1}{c|}{\textbf{26.50}} & \multicolumn{1}{c|}{34.56} & \multicolumn{1}{c|}{{\ul 52.79}} & \multicolumn{1}{c|}{47.65} & \multicolumn{1}{c|}{43.73} & \multicolumn{1}{c|}{\textbf{71.59}} & \multicolumn{1}{c|}{{\ul 33.08}} & \multicolumn{1}{c|}{24.80} & \multicolumn{1}{c|}{\textbf{70.72}} & \multicolumn{1}{c|}{54.92} & \multicolumn{1}{c|}{{\ul 59.81}} & \multicolumn{1}{c|}{{\ul 50.23}} & \multicolumn{1}{c|}{{\ul 37.58}} & \multicolumn{1}{c|}{{\ul 44.41}} & \multicolumn{1}{c|}{44.21} \\
\multicolumn{1}{|c|}{} & \multicolumn{1}{c|}{\Elf{}} & \multicolumn{1}{c|}{60.92} & \multicolumn{1}{c|}{69.30} & \multicolumn{1}{c|}{21.75} & \multicolumn{1}{c|}{67.49} & \multicolumn{1}{c|}{26.59} & \multicolumn{1}{c|}{28.57} & \multicolumn{1}{c|}{57.27} & \multicolumn{1}{c|}{23.17} & \multicolumn{1}{c|}{36.11} & \multicolumn{1}{c|}{6.14} & \multicolumn{1}{c|}{27.19} & \multicolumn{1}{c|}{51.41} & \multicolumn{1}{c|}{{\ul 47.86}} & \multicolumn{1}{c|}{51.22} & \multicolumn{1}{c|}{56.38} & \multicolumn{1}{c|}{24.52} & \multicolumn{1}{c|}{15.01} & \multicolumn{1}{c|}{37.78} & \multicolumn{1}{c|}{37.70} & \multicolumn{1}{c|}{7.35} & \multicolumn{1}{c|}{43.99} & \multicolumn{1}{c|}{25.98} & \multicolumn{1}{c|}{31.86} & \multicolumn{1}{c|}{30.97} \\
\multicolumn{1}{|c|}{} & \multicolumn{1}{c|}{\Elfp} & \multicolumn{1}{c|}{42.92} & \multicolumn{1}{c|}{44.18} & \multicolumn{1}{c|}{56.77} & \multicolumn{1}{c|}{68.25} & \multicolumn{1}{c|}{33.23} & \multicolumn{1}{c|}{53.84} & \multicolumn{1}{c|}{58.62} & \multicolumn{1}{c|}{51.62} & \multicolumn{1}{c|}{53.78} & \multicolumn{1}{c|}{5.36} & \multicolumn{1}{c|}{5.75} & \multicolumn{1}{c|}{36.37} & \multicolumn{1}{c|}{28.64} & \multicolumn{1}{c|}{48.47} & \multicolumn{1}{c|}{10.05} & \multicolumn{1}{c|}{11.26} & \multicolumn{1}{c|}{1.04} & \multicolumn{1}{c|}{49.29} & \multicolumn{1}{c|}{51.56} & \multicolumn{1}{c|}{48.81} & \multicolumn{1}{c|}{9.67} & \multicolumn{1}{c|}{26.43} & \multicolumn{1}{c|}{25.90} & \multicolumn{1}{c|}{33.11} \\
\multicolumn{1}{|c|}{} & \multicolumn{1}{c|}{\Camel{}} & \multicolumn{1}{c|}{\textbf{95.15}} & \multicolumn{1}{c|}{75.00} & \multicolumn{1}{c|}{38.54} & \multicolumn{1}{c|}{{\ul 95.04}} & \multicolumn{1}{c|}{{\ul 35.78}} & \multicolumn{1}{c|}{32.68} & \multicolumn{1}{c|}{{\ul 83.70}} & \multicolumn{1}{c|}{35.83} & \multicolumn{1}{c|}{/} & \multicolumn{1}{c|}{/} & \multicolumn{1}{c|}{/} & \multicolumn{1}{c|}{/} & \multicolumn{1}{c|}{/} & \multicolumn{1}{c|}{/} & \multicolumn{1}{c|}{/} & \multicolumn{1}{c|}{\textbf{67.42}} & \multicolumn{1}{c|}{12.51} & \multicolumn{1}{c|}{36.22} & \multicolumn{1}{c|}{{\ul 56.90}} & \multicolumn{1}{c|}{48.56} & \multicolumn{1}{c|}{/} & \multicolumn{1}{c|}{/} & \multicolumn{1}{c|}{/} & \multicolumn{1}{c|}{{\ul 48.28}} \\
\multicolumn{1}{|c|}{\multirow{-6}{*}{\rotatebox{90}{Decom. Speed}}} & \multicolumn{1}{c|}{\cellcolor[HTML]{FFEEC4}\method{}} & \multicolumn{1}{c|}{\cellcolor[HTML]{FFEEC4}48.28} & \multicolumn{1}{c|}{\cellcolor[HTML]{FFEEC4}47.15} & \multicolumn{1}{c|}{\cellcolor[HTML]{FFEEC4}{\ul 63.86}} & \multicolumn{1}{c|}{\cellcolor[HTML]{FFEEC4}\textbf{105.21}} & \multicolumn{1}{c|}{\cellcolor[HTML]{FFEEC4}{\textbf{56.43}}} & \multicolumn{1}{c|}{\cellcolor[HTML]{FFEEC4}{\textbf{84.58}}} & \multicolumn{1}{c|}{\cellcolor[HTML]{FFEEC4}\textbf{87.13}} & \multicolumn{1}{c|}{\cellcolor[HTML]{FFEEC4}\textbf{91.15}} & \multicolumn{1}{c|}{\cellcolor[HTML]{FFEEC4}\textbf{92.65}} & \multicolumn{1}{c|}{\cellcolor[HTML]{FFEEC4}{\ul 8.89}} & \multicolumn{1}{c|}{\cellcolor[HTML]{FFEEC4}{\textbf{40.85}}} & \multicolumn{1}{c|}{\cellcolor[HTML]{FFEEC4}{\textbf{68.78}}} & \multicolumn{1}{c|}{\cellcolor[HTML]{FFEEC4}26.22} & \multicolumn{1}{c|}{\cellcolor[HTML]{FFEEC4}\textbf{62.16}} & \multicolumn{1}{c|}{\cellcolor[HTML]{FFEEC4}{\ul 62.48}} & \multicolumn{1}{c|}{\cellcolor[HTML]{FFEEC4}27.25} & \multicolumn{1}{c|}{\cellcolor[HTML]{FFEEC4}\textbf{82.46}} & \multicolumn{1}{c|}{\cellcolor[HTML]{FFEEC4}42.06} & \multicolumn{1}{c|}{\cellcolor[HTML]{FFEEC4}\textbf{73.83}} & \multicolumn{1}{c|}{\cellcolor[HTML]{FFEEC4}\textbf{64.61}} & \multicolumn{1}{c|}{\cellcolor[HTML]{FFEEC4}\textbf{59.51}} & \multicolumn{1}{c|}{\cellcolor[HTML]{FFEEC4}24.59} & \multicolumn{1}{c|}{\cellcolor[HTML]{FFEEC4}\textbf{53.12}} & \multicolumn{1}{c|}{\cellcolor[HTML]{FFEEC4}\textbf{63.29}} \\ 
\hline
\end{tabular}
\end{adjustbox}
\end{table*}

\section{Experimental Study}
\label{sec:exp}

\subsection{Experimental Settings}
\label{ssec:settings}

\noindent\textbf{{Datasets}}.
\lcyr{M2.(3)}\lcy{
We employ one synthetic and 21 real-world datasets from previous studies~\cite{li_elf_2023,Elf_Plus_2023,camel}: 14 time-series and 8 non-time-series sets kept in original order (see Table~\ref{tab:ACB}).
The latter lack timestamps, and their irregularity challenges smoothness-based converters (type-\ding{172}); \textsf{AS} (synthetic, noisy) stresses \textit{redundancy-based} converters (type-\ding{173}); \textsf{PA}/\textsf{PO} supply high-precision geospatial coordinates that strain both types.
These datasets contain sequences of up to several hundred million points, with numerical precision ($\mathit{dp}$) from 3 to 17, value ranges vary from narrow (e.g., $[0.01, 1.45]$) to broad (e.g., $[13{,}068, 532{,}323]$), and domains spanning meteorology, finance, ecology, mobility, blockchain, and geospatial analysis
\footnote{\lcy{The resources, statistics, processing scripts for all datasets, and the \method{} source code are publicly available at \url{https://github.com/SuDIS-ZJU/DeXOR}.}}.\lcyr{M2.(3)}
} \\
\noindent\textbf{{Baselines and Metrics}}. 
\lcy{
We compare \method{} with several state-of-the-art \problem{} schemes, including \Gorilla{}~\cite{pelkonen_gorilla_2015}, \Chimp{}~\cite{liakos_chimp_2022}, \Camel{}~\cite{camel}, and the \Elf{} series (\Elf{}~\cite{li_elf_2023} and its variant \Elfp{}~\cite{Elf_Plus_2023}).}
\cyr{R4.W1 \\ \&D1 (1)}
\lcy{To broaden the comparison, Section~\ref{ssec:larger_buffer} also covers shemes that use sliding or truncated windows: \ALP{}~\cite{afroozeh_alp_2023}, \Chimpp{}~\cite{liakos_chimp_2022}, \SElfStar{}~\cite{li2025adaptive}, and \ElfStar{}~\cite{li2025adaptive}.}
%
%
%
To provide a more intuitive measure of compression ratios, we adopt the \textbf{\emph{Average Compression Bits} (ACB)} metric from \ALP{}~\cite{afroozeh_alp_2023}.
\cyblue{Compression ratio equals ACB/64 (the IEEE754 representation length).
We assess the efficiency in terms of \textbf{\emph{compression}} and \textbf{\emph{decompression speed}}, measured in megabytes per second (MB/s).} \\
\noindent\textbf{{Settings and Implementation}}. 
Experiments run on an Ubuntu server (Intel Xeon Gold 6348, 512 GB RAM).  
\lcyr{M1.(4)}\lcy{All \problem{} schemes are Java-based and \texttt{Docker}-containerized for reproducibility and resource isolation; the pipeline supports timestamp-driven streaming.  
System-level IoTDB~\cite{wang_apache_2020} integration is reported in Section~\ref{ssec:tsdb}.}

\subsection{Overall Performance}
\label{ssec:fundamental}

\begin{figure}[t]
    \centering
    \includegraphics[width=0.90\columnwidth]{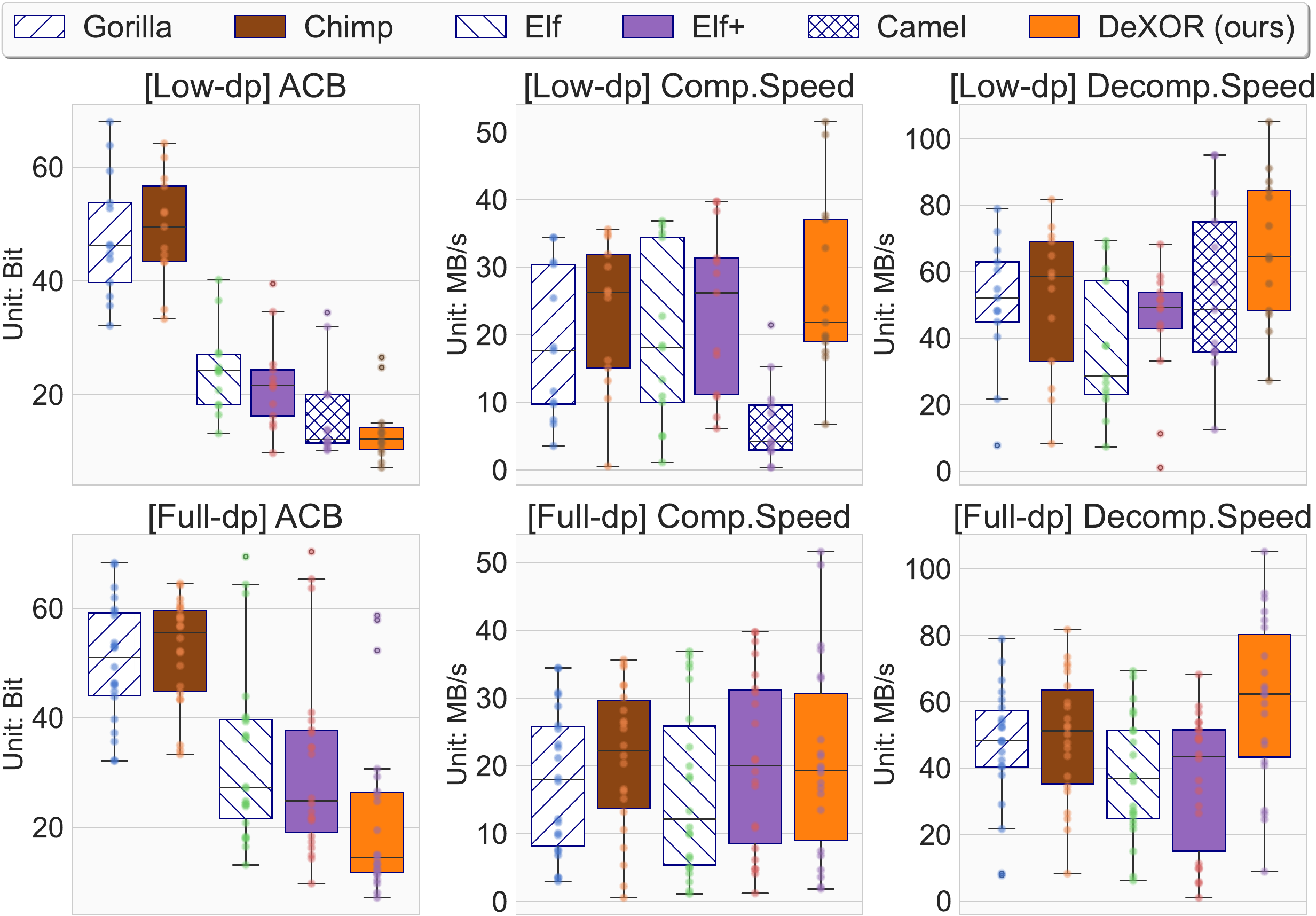}
    \caption{\lcy{Box-plot visualization of Table~\ref{tab:ACB} under low-$\mathit{dp}$ (top) and full-$\mathit{dp}$ (bottom); boxes show IQR as median, whiskers span 1.5$\times$IQR, and points mark individual dataset outcomes.}}
    \label{fig:full}
\end{figure}


Table~\ref{tab:ACB} \lcyr{R1.W5}\lcy{and Figure~\ref{fig:full}} present the results across the three metrics.

\noindent\textbf{Average Compression Bits}. 
\cyblue{
\method{} consistently delivers the best ACB overall, trailing only \Camel{} by $\leq2$ bits on three low-$\mathit{dp}$ sets (\textsf{DPT}, \textsf{SUK}, \textsf{EVC}), where \Camel{} employs lossy truncation targeting $q\in[-4,-1]$.
This specialization, however, incurs accuracy loss or overflow on high-$\mathit{dp}$ data (red) and outlier series (\textsf{FP}).
\Gorilla{} and \Chimp{} yield $\geq 30$ and generally above 50 bits via \XOR{} and variable-length encoding strategies; \Elf{} and \Elfp{} exhibit reduced ACBs ($\leq30$ and even $\leq10$ in some cases) on low-$\mathit{dp}$ but degrade on high-$\mathit{dp}$ (\textsf{AS}, \textsf{PA}, \textsf{PO}).  
Overall, only \method{} (alongside \Camel{}) sustains ACB$\leq20$ in most cases, compressing low-$\mathit{dp}$ inputs to $\leq25\%$ of their original size, and reaching as low as $12.5\%$ at best.
}



\noindent\textbf{Compression Speed}. 
\cyblue{\method{} stays within 5\% of the fastest baseline (\Chimp{}) and doubles the throughput of its closest ratio rival, \Camel{}.
On \textsf{CA} it lags because unstable tail coordinates trigger recomputation (Algorithm~\ref{alg:precondition}); yet in the low-$\mathit{dp}$ blue zone ($\mathit{dp} \leq 7$ and $q\in[-4,-1]$) it outruns \Elfp{} by 21\% and \Camel{} by 4.7×.
\XOR{} schemes (\Chimp{}, \Gorilla{}) minimize CPU; \Elf{} and \Elfp{} balance I/O; \method{}’s dynamic heuristics (see Algorithm~\ref{alg:precondition}) favour regular series, with only rare irregulars (\textsf{IR}, \textsf{CA}) incurring overhead.}


\noindent\textbf{Decompression Speed.}  
\cyblue{
\method{} leads on 13/22 datasets, outpacing \Camel{} by 31\% and \Gorilla{} by 41\% on average; it remains fastest even where its compression lags (\textsf{IR}, \textsf{CA}) and doubles state-of-the-art throughput on \textsf{SSD}.  
The only exception is a few low-$\mathit{dp}$ sets (e.g., \textsf{WS}) where \method{} runs at roughly half the speed of \Camel{}.
}


\cyblue{
This superior decompression performance stems from two factors.  
First, \method{} reads only the stored coordinate, so unstable patterns do not slow decompression.  
Second, reusing the previous LCP $\alpha_i$ cuts decompression work by ${\sim}50\%$ versus full recomputes (Section~\ref{ssec:decomp}).  
On low-$\mathit{dp}$ datasets (\textsf{PM}, \textsf{EVC}) I/O savings and reusable steps are limited, narrowing the advantage slightly, yet \method{} remains markedly faster on high-$\mathit{dp}$ data.
}


\subsection{\cy{Scalability Studies}}
\label{ssec:Scalability}

\begin{figure}[t]
    \centering
    \includegraphics[width=\columnwidth]{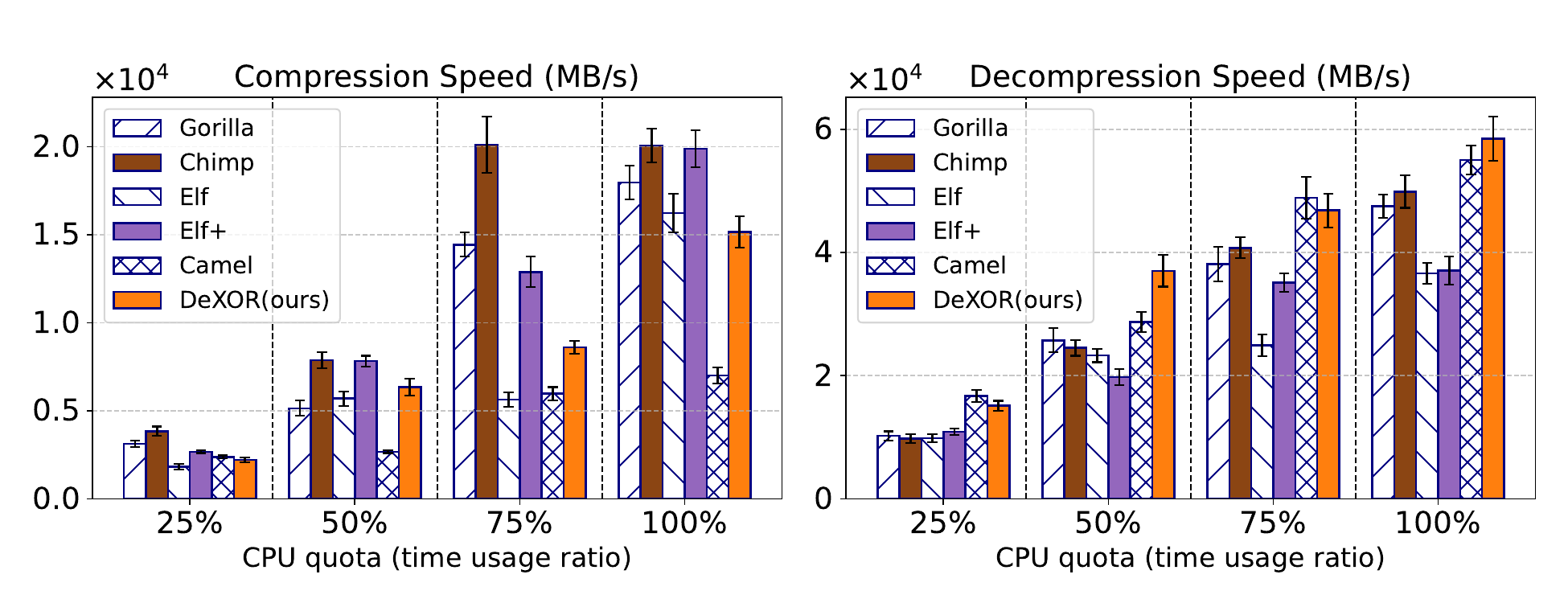}
    \caption{\cy{Scalability vs. CPU quotas. Bars show mean over 10 runs, with error bars indicating one standard deviation.}}
    \label{fig:period}
    \vspace{-2mm}
\end{figure}

\cyblue{
We assess the scalability of all schemes along two axes --- (1) \emph{CPU-quota constraints} and (2) \emph{data-sampling variability}:
}

\noindent\cy{\textbf{CPU-quota constraints} (Figure~\ref{fig:period})}.\cyr{R1.W4 \\ \&D4}
\cyblue{
We throttle CPU from 25\% to 100\% and record speeds.  
Most schemes scale linearly; \XOR-based schemes (\Gorilla{}, \Chimp{}) accelerate fastest at 50–75\%.  
\method{} improves steadily: second for decompression at 25\%, first for both compression and decompression at 50–100\%.
}

\noindent\cy{\textbf{Data-sampling variability} (Figure~\ref{fig:scalablity_datasets}).}\cyr{R1.W1 \\ \&D1 (1)}
We also assess robustness by compressing representative time-series datasets of distinct precisions, under two sampling strategies: (1) \emph{continuous sampling}, which preserves the temporal order, and (2) \emph{random sampling}, which disrupts contextual continuity.

Overall, \method{} consistently achieves the lowest ACB with minimal variance across both sampling strategies.
Random sampling significantly impacts the smoothness of streaming data. \XOR{}-based methods like \Gorilla{} are highly sensitive to smoothness, achieving the poorest ACB at 60\% random sampling in the \textsf{AP} dataset.
\Elf{} and \Elfp{} demonstrate better tolerance than simple \XOR{}-based methods.
Decimal-aware methods (\Camel{} and \method{}) perform well on \textsf{CT} and \textsf{DPT} datasets, but \Camel{} suffers a significant drop in ACB ($10 \sim 15$ bits lost) under random sampling on \textsf{CT} and \textsf{DPT}, while \method{} remains stable.
On \textsf{WS}, the \Elf{} series suffers from disrupted correlations. In contrast, \method{} maintains ACB stability by effectively utilizing data regularity (Section~\ref{sssec:postprocessing}).


\begin{figure*}[t]
    \centering
    \includegraphics[width=0.98\textwidth]{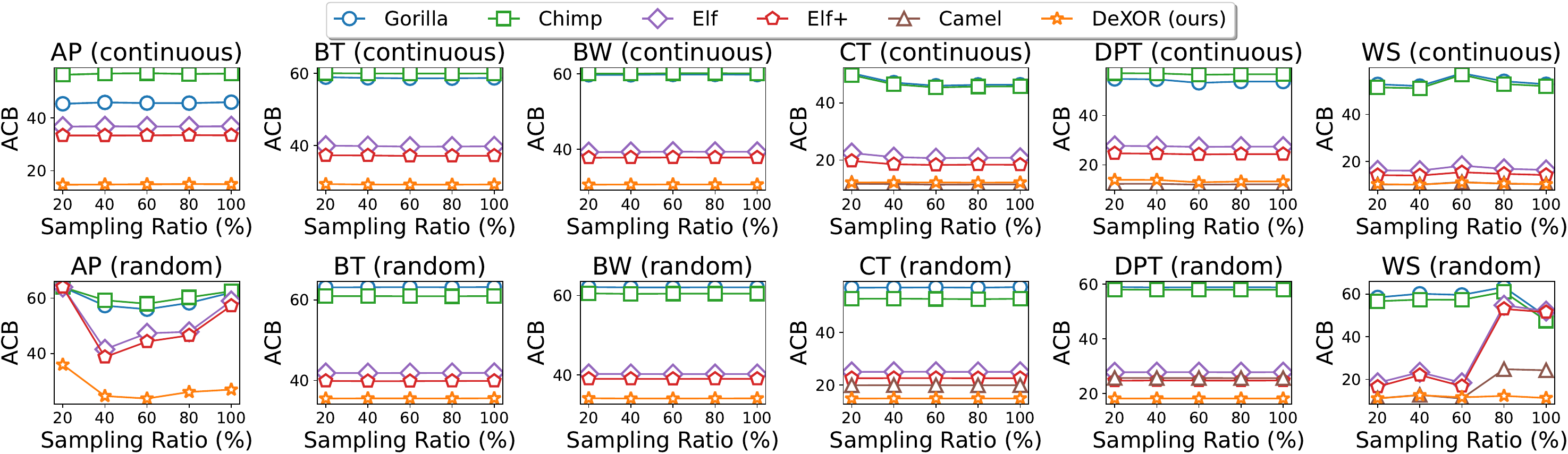}
    \caption{Scalability test on six representative time-series datasets (`continuous' means temporal order preserved while `random' means contextual continuity disrupted). \Camel{} does not work in high-$\mathit{dp}$ datasets \textsf{AP}, \textsf{BT}, and \textsf{BW}.}
    \label{fig:scalablity_datasets}
\end{figure*}

\subsection{\cy{Parameter Sensitivity Analysis}}
\label{ssec:Parameter} 

Section~\ref{ssec:elastic} introduces a hyperparameter $\rho$, which acts as a threshold for determining when to contract the bit length $\mathtt{EL}$ of the recorded exponential subtraction.
\cyr{R2.W3 \\ (1) 3a}\cy{
Optimizing $\rho$ can slightly improve compression ratios in high-$\mathit{dp}$ datasets, such as \textsf{AS}, \textsf{PA}, and \textsf{PO}.} 

As shown in Figure~\ref{fig:parameter_experiment}, the three dashed lines represent the scenario where contraction never occurs ($\rho \to +\infty$), resulting in significant redundancy and the worst ACB. Conversely, unconditional contraction ($\rho = 0$) performs better, indicating that uncontrolled expansion has a more detrimental impact than failures caused by premature contraction. This observation suggests favoring smaller $\rho$ values.
The effect of $\rho$ on ACB follows a consistent trend across datasets: ACB decreases rapidly to a minimal value and then grows very slowly. For the \textsf{PA} dataset, the ACB stabilizes once $\rho \geq 1$ and only begins to rise noticeably at $\rho \geq 9$. In contrast, for the \textsf{PO} dataset, $\rho = 0$ outperforms $\rho \geq 1$, suggesting frequent attenuation of exponential fluctuations, making $\rho = 0$ a locally optimal choice.

\cy{Based on these findings,\cyr{R2.W3 \\ (1) 3a} we set $\rho = 8$ as the default value for all datasets. 
This choice is likely optimal for \textsf{PA} and \textsf{PO}, while the difference from the local optimum for \textsf{AS} is less than 0.01 bits.
In most cases, this default setting suffices.}

\begin{figure}[t]
    \centering
    \includegraphics[width=0.95\columnwidth]{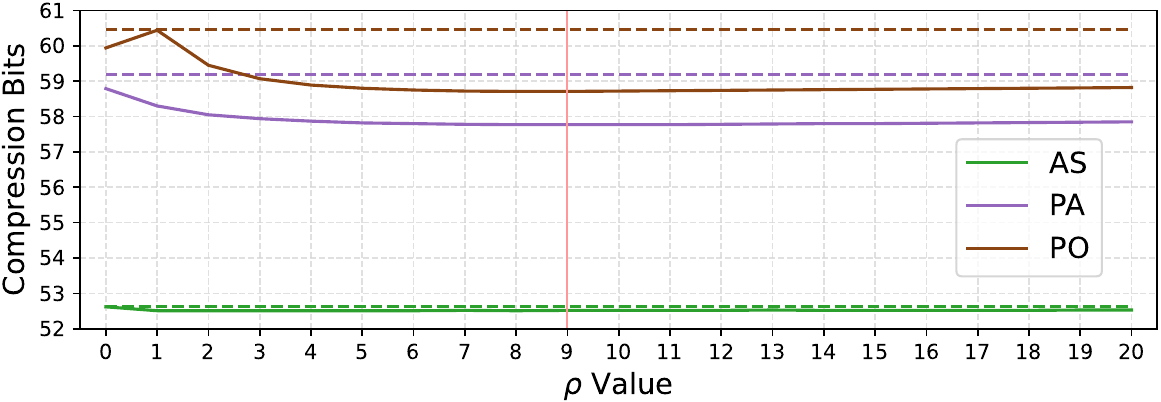}
    \caption{The impact of $\rho$ on ACB for \method{}. Dashed lines represent the cases when $\rho \to +\infty$.
}
    \label{fig:parameter_experiment}
    \vspace{-2mm}
\end{figure}

\subsection{Ablation Studies}
\label{ssec:Ablation} 

\begin{table}
    \centering
    \caption{Ablation study with best ACB results in bold. 
    $\Delta$ indicates ACB changes, with green for gains and red for losses. Gray sections indicate deactivation of the exception handler.}
    \label{tab:ablation}
    \setlength{\tabcolsep}{3pt} 
    \renewcommand{\arraystretch}{0.4} 
    \resizebox{0.95\columnwidth}{!}{%
    \begin{tabular}{ccr|rr|rr|rr}
    \toprule
    \multicolumn{2}{l}{\textbf{Dataset}} & \textbf{\method} & \multicolumn{2}{c|}{\textbf{w/o Excep.}} & \multicolumn{2}{c|}{\textbf{w/o \textsc{dec. xor}}} & \multicolumn{2}{c}{\textbf{w/o Both}} \\ \cmidrule(lr){4-5} \cmidrule(lr){6-7} \cmidrule(lr){8-9} 
     &  & \textbf{ACB $\downarrow$} & \textbf{ACB $\downarrow$} & \textbf{$\Delta$ (\%)} & \textbf{ACB $\downarrow$} & \textbf{$\Delta$ (\%)} & \textbf{ACB $\downarrow$} & \textbf{$\Delta$ (\%)} \\ \midrule
     \multirow{14}{*}{\rotatebox{90}{\textbf{Time-Series (TS)}}} & \textsf{WS} & \textbf{10.35} & \textbf{10.35} & \cellcolor[HTML]{DDDDDD}0.00  & 10.73 & {\color[HTML]{CB0000} -3.67} & 10.73 & {\color[HTML]{CB0000} -3.67} \\
     & \textsf{PM} & \textbf{7.12} & \textbf{7.12} & \cellcolor[HTML]{DDDDDD}0.00  & 8.72 & {\color[HTML]{CB0000} -22.47} & 8.72 & {\color[HTML]{CB0000} -22.47} \\
     & \textsf{CT} & \textbf{11.33} & \textbf{11.33} & \cellcolor[HTML]{DDDDDD}0.00  & 14.02 & {\color[HTML]{CB0000} -23.74} & 14.02 & {\color[HTML]{CB0000} -23.74} \\
     & \textsf{IR} & \textbf{8.01} & \textbf{8.01} & \cellcolor[HTML]{DDDDDD}0.00  & 13.47 & {\color[HTML]{CB0000} -68.16} & 13.47 & {\color[HTML]{CB0000} -68.16} \\
     & \textsf{DPT} & \textbf{13.22} & \textbf{13.22} & \cellcolor[HTML]{DDDDDD}0.00  & 18.18 & {\color[HTML]{CB0000} -37.52} & 18.18 & {\color[HTML]{CB0000} -37.52} \\
     & \textsf{SUSA} & \textbf{9.71} & \textbf{9.71} & \cellcolor[HTML]{DDDDDD}0.00  & 19.06 & {\color[HTML]{CB0000} -96.29} & 19.06 & {\color[HTML]{CB0000} -96.29} \\
     & \textsf{SUK} & \textbf{11.59} & \textbf{11.59} & \cellcolor[HTML]{DDDDDD}0.00  & 21.16 & {\color[HTML]{CB0000} -82.57} & 21.16 & {\color[HTML]{CB0000} -82.57} \\
     & \textsf{SDE} & \textbf{12.22} & \textbf{12.22} & \cellcolor[HTML]{DDDDDD}0.00  & 21.68 & {\color[HTML]{CB0000} -77.41} & 21.68 & {\color[HTML]{CB0000} -77.41} \\
     & \textsf{AP} & \textbf{14.87} & \textbf{14.87} & \cellcolor[HTML]{DDDDDD}0.00  & 27.68 & {\color[HTML]{CB0000} -86.15} & 27.68 & {\color[HTML]{CB0000} -86.15} \\
     & \textsf{BM} & \textbf{19.47} & \textbf{19.47} & \cellcolor[HTML]{DDDDDD}0.00  & 29.33 & {\color[HTML]{CB0000} -50.64} & 29.33 & {\color[HTML]{CB0000} -50.64} \\
     & \textsf{BW} & \textbf{30.67} & \textbf{30.67} & \cellcolor[HTML]{DDDDDD}0.00  & 32.21 & {\color[HTML]{CB0000} -5.02} & 32.21 & {\color[HTML]{CB0000} -5.02} \\
     & \textsf{BT}& \textbf{29.25} & \textbf{29.25} & \cellcolor[HTML]{DDDDDD}0.00  & 31.73 & {\color[HTML]{CB0000} -8.48} & 31.73 & {\color[HTML]{CB0000} -8.48} \\
     & \textsf{BP} & \textbf{25.89} & \textbf{25.89} & \cellcolor[HTML]{DDDDDD}0.00  & 34.35 & {\color[HTML]{CB0000} -32.68} & 34.35 & {\color[HTML]{CB0000} -32.68} \\
     & \textsf{AS} & \textbf{52.28} & 53.13 & {\color[HTML]{CB0000} -1.63} & 55.93 & {\color[HTML]{CB0000} -6.98} & 64.70 & {\color[HTML]{CB0000} -23.76} \\ \midrule
     \multirow{8}{*}{\rotatebox{90}{\textbf{Non-TS}}} & \textsf{FP} & \textbf{12.38} & \textbf{12.38} & \cellcolor[HTML]{DDDDDD}0.00  & 16.99 & {\color[HTML]{CB0000} -37.24} & 16.99 & {\color[HTML]{CB0000} -37.24} \\
     & \textsf{EVC} & \textbf{14.13} & \textbf{14.13} & \cellcolor[HTML]{DDDDDD}0.00  & 14.77 & {\color[HTML]{CB0000} -4.53} & 14.77 & {\color[HTML]{CB0000} -4.53} \\
     & \textsf{SSD} & \textbf{13.27} & \textbf{13.27} & \cellcolor[HTML]{DDDDDD}0.00  & 17.20 & {\color[HTML]{CB0000} -29.62} & 17.20 & {\color[HTML]{CB0000} -29.62} \\
     & \textsf{BL} & \textbf{15.00} & \textbf{15.00} & \cellcolor[HTML]{DDDDDD}0.00  & 18.62 & {\color[HTML]{CB0000} -24.13} & 18.62 & {\color[HTML]{CB0000} -24.13} \\
     & \textsf{CA} & \textbf{24.76} & \textbf{24.76} & \cellcolor[HTML]{DDDDDD}0.00  & 24.97 & {\color[HTML]{CB0000} -0.85} & 24.97 & {\color[HTML]{CB0000} -0.85} \\
     & \textsf{CO} & 26.53 & 26.53 & \cellcolor[HTML]{DDDDDD}0.00  & \textbf{26.19} & {\color[HTML]{009901} \textbf{1.28}} & \textbf{26.19} & {\color[HTML]{009901} \textbf{1.28}} \\
     & \textsf{PA} & 57.86 & 63.97 & {\color[HTML]{CB0000} -10.56} & \textbf{57.61} & {\color[HTML]{009901} \textbf{0.43}} & 64.70 & {\color[HTML]{CB0000} -11.82} \\
     & \textsf{PO} & 58.7 & 65.05 & {\color[HTML]{CB0000} -10.82} & \textbf{58.49} & {\color[HTML]{009901} \textbf{0.36}} & 65.24 & {\color[HTML]{CB0000} -11.14} \\ 
    \midrule
    \multicolumn{2}{c}{\textbf{Average}} & \textbf{21.76} & 22.36 & {\color[HTML]{CB0000} -2.78} & 26.05 & {\color[HTML]{CB0000} -19.74} & 27.08 & {\color[HTML]{CB0000} -24.46} \\ 
    \bottomrule
    \end{tabular}
    }

\end{table}

We conduct a module-wise ablation analysis to evaluate the contribution of each component \cyblue{while always retaining the \method{} compressor}. 
%
We examine: (1) \emph{w/o Excep.}, which stores extreme-case data directly in IEEE754 format without exception handling; and (2) \emph{w/o \textsc{dec. xor}}, which skips eliminating the LCP (Section~\ref{ssec:converter}) but still stores metadata and the scaled suffix, including $\delta = (o_i - q_i)$, $q_i$ and $\beta_i = v_i \times 10^{-q_i}$. 
Results are summarized in Table~\ref{tab:ablation}.


\noindent\textbf{Effect of the Exception Handler}.
This plug-in module is primarily triggered for high-$\mathit{dp}$ datasets, as indicated by the gray sections in Table~\ref{tab:ablation}. While its activation is limited to specific cases, it contributes greatly to ACB performance on datasets like {\textsf{AS}}, {\textsf{PA}}, and {\textsf{PO}}.
Without this module, \method{} experiences a slight drop in performance on the time-series dataset {\textsf{AS}} and suffers notable degradation on non-time-series datasets such as {\textsf{PA}} and {\textsf{PO}---sometimes performing worse than the original 64-bit IEEE754 representation.
%
%
\lcyr{M1.(5)}
\lcy{However, we confirm that \method{} remains competitive w.r.t.\ all baselines in terms of ACB, even without the exception handler.}
\cyr{R2.W3 \\ (2) 3b}\cy{In contrast, after ablating this module, any drawback on low-precision datasets must stem from tolerance failure (see Section~\ref{sssec:preconditions}), yet the observed value is zero; therefore, tolerance errors are negligible.}

\noindent\textbf{Effect of \DXOR{}}.
Removing \DXOR{} results in severe performance degradation, reducing \method{} to a scaling-to-integer scheme. 
As reported in Table~\ref{tab:ablation}, this leads to an average ACB loss of 19.74\% across all datasets.
However, \DXOR{} introduces overhead for specific datasets (e.g., {\textsf{PA}}, {\textsf{PO}}, {\textsf{CO}}) when misaligned scales result in front-padding with zeros, increasing required precision.
For example, processing $\uline{8.81479} \diamond 88.1479$ as the reference could result in the LCP coordinate $(o_i=2) > (p_i=0)$, rendering $\delta = (o_i-q_i = 7) > \mathit{dp} =(p_i-q_i +1=6)$\footnote{Consider a simplified case: $\uline{v_X}=8.8$ and $v_Y=88.1$ have different head coordinates $(v_X.p = 0) \neq (v_Y.p = 1)$, implying LCP $\alpha=0$ at $o=2 = \max(v_X.p, v_Y.p) + 1$. This pads the suffix to $088$ in \DXOR{}, which, while correct, is inefficient.}. 
This reflects a trade-off in our converter: while reusing the previous suffix avoids storing it explicitly, misalignment can occasionally demand higher precision.

\noindent\textbf{Effect of Both Modules}.
When both modules are removed, \method{} essentially retains only the functions of metadata reuse (Section~\ref{sssec:postprocessing}) and the compressor (Section~\ref{ssec:compressor}). 
Even so, Table~\ref{tab:ablation} shows that the remaining components achieve competitive compression ratios compared to \Elf{}, \Elfp{}, and \Camel{} (cf.\ Table~\ref{tab:ACB}).

Two notable observations emerge: 
(1) On the {\textsf{CO}} dataset, the stripped-down version of \method{} achieves slightly better performance than the full version (+1.28\%), due to misalignment overhead caused by rare data cases. 
A similar effect is observed on {\textsf{PA}} and {\textsf{PO}}, though the losses are offset by the removal of the exception handler.
(2) \DXOR{} reduces exceptions by eliminating the LCP, allowing some high-$\mathit{dp}$ data to fall outside the bounds of exceptional cases. 
For example, on the {\textsf{AS}} dataset, ablation deteriorates ACB by -23.76\%, far worse than the combined impact of individual ablations ((-1.63\%) + (-6.98\%)). 
This underscores a key advantage of \method{}: its ability to perform post-judgment of exceptions, enabling more accurate handling of high-$\mathit{dp}$ yet smooth data. 

\subsection{SLC with Buffering or Pre-processing}
\label{ssec:larger_buffer} 

\noindent\textbf{Larger Buffer}.
\cy{
\cyr{R4.W1 \\ \&D1 (1)}Our main setting (Section~\ref{ssec:problem}) restricts access to only the immediately preceding values. In contrast, some schemes leverage larger buffers --- e.g., sliding windows (\Chimpp{}~\cite{liakos_chimp_2022}, \SElfStar{}~\cite{li2025adaptive}) or mini‑batches (\ALP{}~\cite{afroozeh_alp_2023}, \Elf{}$\ast$~\cite{li2025adaptive}) --- to improve compression. Because these methods require more memory, we report them separately rather than mixing them with the primary baselines in Table~\ref{tab:ACB}.
}
%
Each scheme's sensitivity to buffer size and its target scenarios differ.
To ensure fairness, we use each's recommended buffer sizes: $N=128$ for \Chimpp{}, $N=1024$ for \ALP{}, and $N=1000$ for \SElfStar{} and \ElfStar{}. 
\lcyr{M2.(5)}
\lcy{As shown in Table~\ref{tab:larger_buffer_comparison}, despite competing against buffer‑heavy schemes, \method{} achieves the fastest decompression and ranks second in both compression ratio and speed, trailing \ElfStar{} by only 0.27 bits and 1.12~MB/s.}

\begin{table}
\centering
\caption{Comparison with larger buffer schemes.}
\label{tab:larger_buffer_comparison}
\setlength{\tabcolsep}{3pt} 
\renewcommand{\arraystretch}{0.4} 
\resizebox{0.95\columnwidth}{!}{%
\begin{tabular}{@{}l|ccccc@{}}
\toprule
Metrics (geomean) & \Chimpp{} & \SElfStar{} & \ALP{}  & \ElfStar{}  & \method{} (ours) \\ \midrule
AVG. Comp. Bits $\downarrow$ & 31.37  & 19.55 & 52.97 & \textbf{17.55} & \uline{17.82} \\ 
Comp. Speed (MB/s) $\uparrow$     & 12.65  & 14.54 & 2.07  & \textbf{16.06} & \uline{14.94} \\ 
Decomp. Speed (MB/s) $\uparrow$   & 40.27  & \uline{49.69} & 38.73 & 23.41 & \textbf{53.12} \\ \bottomrule
\end{tabular}
}
\end{table}

\begin{table}[t]
\caption{\lcy{ACB ($\downarrow$) under pre-processing strategies: original, zero-mean (\textsf{-Z}), and standardization (\textsf{-S}) on four datasets.}}
\label{tab:preprocessing}
\centering
\setlength{\tabcolsep}{6pt} 
\renewcommand{\arraystretch}{0.4} 
\resizebox{0.80\columnwidth}{!}{%
\begin{tabular}{@{}lccccc@{}}
\toprule
Dataset & \Gorilla{} & \Chimp{} & \Elf{} & \Elfp{} & \method{} (ours) \\ 
\midrule
\textsf{CT} & 46.34 & 45.75 & 20.78 & \uline{18.35} & \textbf{11.33} \\
\textsf{CT-Z} & \uline{47.95} & \textbf{47.66} & 52.51 & 53.51 & 57.10 \\
\textsf{CT-S} & \textbf{51.48} & \uline{52.04} & 54.97 & 55.95 & 55.94 \\
\midrule
\textsf{AS} & \uline{53.06} & 58.33 & 62.73 & 63.69 & \textbf{52.28} \\
\textsf{AS-Z} & \textbf{51.00} & 58.39 & 60.57 & 61.56 & \uline{53.44} \\
\textsf{AS-S} & \textbf{53.21} & 58.43 & 62.74 & 63.68 & \uline{54.34} \\
\midrule
\textsf{FP} & 32.26 & 34.05 & 17.82 & \uline{17.30} & \textbf{12.38} \\
\textsf{FP-Z} & \textbf{29.79} & \uline{34.69} & 37.53 & 38.53 & 56.05 \\
\textsf{FP-S} & \textbf{35.99} & \uline{40.90} & 43.60 & 44.56 & 42.20 \\
\midrule
\textsf{PA} & 61.94 & \uline{60.75} & 64.39 & 65.34 & \textbf{57.86} \\
\textsf{PA-Z} & 65.77 & \uline{63.69} & 66.48 & 67.44 & \textbf{59.11} \\
\textsf{PA-S} & 67.40 & \uline{63.89} & 68.27 & 69.24 & \textbf{59.10} \\
\bottomrule
\end{tabular}
}
\end{table}

\begin{table*}[t]
\centering
\renewcommand{\arraystretch}{0.5} 
\caption{\lcy{All results are produced within Apache IoTDB Tsfile engine, evaluated on three heterogeneous real-world datasets: \textsf{CT}~(time-series), \textsf{FP}~(non-time-series), and \textsf{PA}~(high-precision non-time-series).}}
\label{tab:iotdb}
    \setlength{\tabcolsep}{8pt} 
    \renewcommand{\arraystretch}{0.4} 
    \resizebox{\textwidth}{!}{%
\begin{tabular}{llcccccccccccc}
\toprule
\multirow{2}{*}{Secondary Compression} & \multirow{2}{*}{Algorithm} &
\multicolumn{4}{c}{Average Compression Bits ($\downarrow$)} & \multicolumn{4}{c}{Compression Throughput (MB/s, $\uparrow$)} & \multicolumn{4}{c}{Query Latency (ms/1k, $\downarrow$)} \\
\cmidrule(lr){3-6} \cmidrule(lr){7-10} \cmidrule(lr){11-14}
& & \textsf{CT} & \textsf{FP} & \textsf{PA} & Avg. & \textsf{CT} & \textsf{FP} & \textsf{PA} & Avg. & \textsf{CT} & \textsf{FP} & \textsf{PA} & Avg. \\
\midrule
\multirow{3}{*}{Uncompressed}
& \Sprintz{} & 51.94 & 52.01 & 61.85 & 55.27 & 1.344 & 4.160 & 3.136 & 2.880 & 0.67 & 0.188 & 0.132 & 0.330 \\
& \Gorilla{} & 58.17 & 40.28 & 67.57 & 55.34 & 6.016 & 5.888 & \textbf{4.096} & \textbf{5.205} & 0.174 & \textbf{0.073} & \textbf{0.090} & 0.112 \\
& \method{}   & \textbf{12.91} & \textbf{13.96} & \textbf{59.45} & \textbf{28.78} & \textbf{6.144} & \textbf{6.016} & 3.072 & 4.949 & \textbf{0.093} & 0.083 & 0.106 & \textbf{0.094} \\
\midrule
\multirow{3}{*}{Lz4}
& \Sprintz{} & 37.57 & 34.93 & 60.95 & 44.48 & 4.032 & 5.696 & 3.264 & 4.331 & 0.105 & 0.118 & 0.127 & 0.116 \\
& \Gorilla{} & 29.00 & 27.97 & 66.70 & 41.22 & 5.120 & 5.760 & \textbf{4.480} & \textbf{5.120} & 0.095 & \textbf{0.081} & 0.101 & 0.092 \\
& \method{}   & \textbf{11.56} & \textbf{12.83} & \textbf{58.54} & \textbf{27.64} & \textbf{5.952} & \textbf{5.824} & 3.008 & 4.928 & \textbf{0.079} & 0.086 & \textbf{0.090} & \textbf{0.085} \\
\midrule
\multirow{3}{*}{Snappy}
& \Sprintz{} & 36.70 & 34.17 & 60.72 & 43.86 & 3.840 & 5.632 & \textbf{4.032} & 4.501 & 0.119 & 0.099 & 0.091 & 0.103 \\
& \Gorilla{} & 30.13 & 28.89 & 66.44 & 41.82 & 5.120 & \textbf{6.208} & 3.968 & 5.099 & 0.140 & \textbf{0.077} & \textbf{0.080} & 0.098 \\
& \method{}   & \textbf{11.69} & \textbf{12.83} & \textbf{58.32} & \textbf{27.61} & \textbf{6.272} & 5.760 & 3.328 & \textbf{5.120} & \textbf{0.073} & 0.092 & 0.092 & \textbf{0.085} \\
\bottomrule
\end{tabular}
}
\end{table*}

\noindent\textbf{Pre-processing Methods}.
\lcy{
Zero-mean normalization and 0-1 standarization reduce local smoothness, increase entropy, and inject irreversible loss alongside extra runtime cost: e.g., converting $50.0\to-0.0431\cdots$ expands precision requirements that strain each SLC scheme.
\lcyr{M2.(6) \\ \& R1.W2}As shown in Table~\ref{tab:preprocessing}, these pre-processing methods hurt all precision-redundancy schemes, including \method{}, while simple \XOR{} schemes (\Gorilla{}, \Chimp{}) benefit.  
The impact on \method{} is less pronounced for high-precision, irregular datasets (\textsf{AS} and \textsf{PA}), where it maintains competitive results.
Therefore, we recommend avoiding transformations that inflate precision or introduce loss before using \method{}; if normalization is required, it should be applied post-\method{}'s lossless decompression for better performance.
}

\subsection{\cy{Real-World Time Series Database Testing}}
\label{ssec:tsdb}

\begin{table}[t]
\centering
\caption{\lcy{Compression and query performance of \Gorilla{} and~\method{} in IoTDB on reprsentative vector datasets.}} 
\label{tab:vector}
    \setlength{\tabcolsep}{3pt} 
    \renewcommand{\arraystretch}{0.4} 
    \resizebox{0.95\columnwidth}{!}{%
\begin{tabular}{lcccccc}
\toprule
\multirow{2}{*}{Metric} & \multirow{2}{*}{Algorithm} &
\multicolumn{3}{c}{Datasets} \\ \cmidrule(lr){3-5}
 & & \textsf{SIFT} & \textsf{WR} & \textsf{WW} \\
\midrule
\multirow{2}{*}{Average Compression Bits ($\downarrow$)} 
  & \Gorilla{} & 19.65 & 42.84 & 43.03 \\ 
  & \method{}   & \textbf{12.99} & \textbf{13.47} & \textbf{13.11} \\
\midrule
\multirow{2}{*}{Compression Throughput (MB/s, $\uparrow$)} 
  & \Gorilla{} & 13.510 & 4.933 & 6.958 \\ 
  & \method{}   & \textbf{18.493} & \textbf{11.167} & \textbf{14.724} \\
\midrule
\multirow{2}{*}{Query Latency (ms/1k, $\downarrow$)} 
  & \Gorilla{} & 0.017 & 0.014 & 0.005 \\ 
  & \method{}   & \textbf{0.0002} & \textbf{0.010} & \textbf{0.004} \\
\bottomrule
\end{tabular}
}
\vspace{-2mm}
\end{table}

\lcy{
Apache IoTDB's columnar engine TsFile~\cite{wang_apache_2020}\lcyr{M1.(4)} has a pluggable encoding layer that lets us add \method{} without orchestration changes.  
On three diverse datasets (\textsf{CT}, \textsf{FP}, \textsf{PA}), we compare \method{} against IoTDB's \Gorilla{} and \Sprintz{} schemes, each cascaded with downstream Lz4~\cite{lz4} or Snappy~\cite{Snappy} to test secondary compression stacking.  
Table~\ref{tab:iotdb} shows that \method{} consistently achieves better compression than \Gorilla{} and \Sprintz{} (up to 4.5$\times$), retains ingestion throughput, and answers 1,000 single-value queries faster; the secondary compressors Lz4 and Snappy yield only $<2$\% extra space reductions and negligible runtime overhead and are thus optional for \method{}.
}

\noindent\textbf{Support for Vector Data Compression}.
\lcyr{R1.W1}
\lcy{
While the main study covers 22 one-dimensional sequences, vectors are also used widely.
To determine whether \method{} can be reused as-is, we pick two representative sets: 
\textsf{SIFT}~\cite{TexmexSiftsmall} (10k vectors, 128 dimensions for computer vision) and \textsc{Wine‑Quality}~\cite{wine_quality_186} (4,898 vectors, 11 dimensions for red/white wines---\textsf{WR} and \textsf{WW}).
Each dimension is compressed independently.
A query corresponds to retrieving a complete vector record, and the latency is measured over 1,000 such queries.
As shown in Table~\ref{tab:vector}, without any vector-specific redesign, \method{} still achieves $\sim3.5\times$ better compression, $\sim2\times$ better throughput, and much lower latency versus dimension-wise \Gorilla{}, indicating the utility of \method{} for vector data.
}



\section{Related Work} 
\label{sec:relatedwork}


\lcy{
\textbf{General-purpose Compression}.  
\textsf{LZ4}~\cite{lz4}, \textsf{LZMA}~\cite{LZMA}, \textsf{Snappy}~\cite{Snappy}, \textsf{XZ}~\cite{xz}, and \textsf{Brotli}~\cite{Brotli_2019} exploit redundancy via Huffman and dictio-\\naries, achieving better compression on static data but incurring decompress-recompress cycles that hinder real-time use.
} 
%

\cyblue{
\textbf{Streaming Lossy Compression} embeds data-cleaning techniques~\cite{klus_direct_2021,fink_compression_2011,liu_deep_2024} and error-bounded mechanisms~\cite{yang_most_2023,liu2021Buff,chandak_lfzip_2020,li2025serf,di_fast_2016} to trade precision for compression ratio. 
Though we target lossless compression, our techniques readily extend to lossy regimes.
}


\lcy{
\textbf{Learned Compression}~\cite{yang_most_2023,di_fast_2016,liu_deep_2024,jiang_time_2023,guerra2025neats} predicts values and encodes high-precision stochastic residuals, so most are lossy (e.g., \textsf{SZ}~\cite{di_fast_2016}, \textsf{MOST}~\cite{yang_most_2023}).  
\lcyr{R4}\textsf{NeaTS}~\cite{guerra2025neats} stands out by providing a \emph{configurable} switch, enabling both lossy and lossless compression.  
Strongly model-dependent and hardware-demanding, these techniques are promising but need further exploration.
}

\lcy{
\textbf{Lossless Compression} has been developed for \textbf{streaming data}~\cite{afroozeh_alp_2023,chiarot_time_2023,xia_time_2024,Elf_Plus_2023,li_elf_2023,kuschewski_btrblocks_2023,pelkonen_gorilla_2015,liakos_chimp_2022,blalock_sprintz_2018,yu_two-level_2020,yan_model-free_2021,gomez-brandon_lossless_2021,hwang_lossless_2023,hartmann_mtsc_2018,huamin_chen_race_2004,alghamdi_chainlink_2020,jiang_time_2023,khelifati_corad_2019,li2025adaptive} and \textbf{floating-point data}~\cite{pelkonen_gorilla_2015,Elf_Plus_2023,li_elf_2023,kuschewski_btrblocks_2023,afroozeh_alp_2023,li2025adaptive}.
Compressing streaming data often utilizes contextual statistics.
Batch schemes like \ALP{}~\cite{afroozeh_alp_2023} and \Elf$\ast$~\cite{li2025adaptive} employ off-line techniques, including entropy and dictionary coding, trading higher latency for better compression.
For floating-point data, IEEE754 encoding issues are mitigated by techniques like \XOR{}~\cite{pelkonen_gorilla_2015} and scaling-to-integer~\cite{afroozeh_alp_2023}.  
\method{} extends these techniques to support integer compression and includes a configurable exception handler to adapt online to diverse datasets.
}

\lcy{
\textbf{Vector Data Compression}.
Low-dimensional vectors can be compressed dimension-wise with SIMD parallelism~\cite{afroozeh_alp_2023}; Compressing high-dimensional vectors (e.g., \textsf{DeltaPQ}~\cite{wang2020deltapq}) relies on quantization. 
Compressing sequential values in a vector is intuitive (see our implementation in Section~\ref{ssec:tsdb}), while exploiting inter-vector smoothness for lossless compression remains an open problem.}

\vfill
\section{Conclusion}
\label{sec:conclusion}

We presents \method{}, a streaming lossless compression scheme for floating‑point data that applies lightweight \XOR{}‑like conversions in decimal space. By exploiting common decimal prefixes and suffixes, it unifies smoothness‑ and redundancy‑based techniques while mitigating IEEE754 conversion errors, anomalies, and high‑precision cases. \method{} delivers state‑of‑the‑art performance and flexibility for decimal‑system numerical compression.

Clear directions exist for building a \method{} family for diverse scenarios:
\cyr{R1.W1 \\ \&D1 (2)}\cy{(1) Batch-mode \method{}: Employing advanced parallelism (e.g., SIMD vectorization) for high-performance, native vector data compression.}
\cy{\cyr{R4.W1 \\ \&D1 (2)}(2) Model-based lossless \method{}: Refining \DXOR{} for compressing prediction residuals from lightweight time-series forecasting models.}
(3) Lossy \method{}: Creating powerful, error-bounded variants adapted to specific data conditions.



\clearpage
\balance
\bibliographystyle{ACM-Reference-Format}
\bibliography{time.series.compression}

\clearpage
\appendix
\section{Supporting Theory and Proofs}
\label{sec:theory}

\subsection{Loss of Smoothness in \Elf{}}
\label{ssec:elf_loss}

We provide a formula to quantify the \textbf{smoothness loss} after conversion:
\begin{equation}
    \hat{q} = \min(v_{X}.q, v_{Y}.q),
\end{equation}
\begin{equation}
S(v_{X}, v_{Y}) = \text{CBL}\left(\operatorname{abs}((v_{X} \times 10^{-\hat{q}}) - (v_{Y} \times 10^{-\hat{q}}))\right),
\end{equation}

The smoothness loss is defined as:  
\begin{equation}
    \text{Loss} = S(\hat{v}_{X}, \hat{v}_{Y}) - S(v_{X}, v_{Y}).
\end{equation}  

Certain algorithms, such as \Elf{}, introduce precision-related errors that disrupt smoothness. Specifically, for \Elf{}, the loss can be expressed as:  
\begin{equation}
\text{Loss}(\Elf{}) = S(v_{X} - \delta_{v_{X}}, v_{Y} - \delta_{v_{Y}}) - S(v_{X}, v_{Y}).
\end{equation} 

Here, $\delta_{v_{X}}$ denotes the discrepancy between the value after erasure and the original value, which may be represented as:
\begin{equation} 
\delta_{v_{X}} = sign \times 2^{(\mathit{exp}-\mathit{bias})}\times \mathit{fraction}_{(\text{lower}~g(v_{X})~\text{bits})},
\end{equation}
where the lower $g(v_{X})$ bits of the original value get erased. 

This concept, originally defined by the \Elf{} authors, can be described in the language of this paper as:
\begin{equation} 
g(v_{X}) = 52 - (\lceil(-q)log_2(10)\rceil + \mathit{exp} - \mathit{bias}).
\end{equation}

When \XOR{} operations are applied, precision mismatches occur. Specifically, when $g(v_{X}) \neq g(v_{Y})$, it follows that $\delta_{v_{X}} \neq \delta_{v_{Y}}$. As a result, the \Elf{} algorithm can disrupt up to $\operatorname{abs}\big(g(v_{X}) - g(v_{Y})\big)$ bits of already-erased CBL.  

Although \Elf{} attempts to combine redundancy elimination and smoothness exploitation, examples illustrate that performing \XOR{} after erasure leads to poor results. Reversing the order (i.e., performing erasure after \XOR{}) is equally problematic. Moreover, precision becomes uncontrollable after \XOR{} operations, making such optimizations challenging to implement effectively.

\subsection{Zero Loss of \DXOR{} Converter}  
\label{ssec:smoothness_loss}  

We formally prove that the \DXOR{} converter incurs no loss of smoothness.  

\begin{lemma}[Zero Loss of \DXOR{}]  
    The smoothness loss of \DXOR{} satisfies:  
    $\forall \uline{v_{X}}, v_{Y} \in \mathbb{R}, \text{Loss}(\uline{v_{X}} \diamond v_{Y}) = 0.$ 
\end{lemma}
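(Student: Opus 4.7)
My plan is to exploit the defining property of \DXOR{}: it subtracts the \emph{same} longest common prefix $\alpha$ from both $v_X$ and $v_Y$, so the difference of the residuals equals $v_X - v_Y$ and the bit length needed to encode it is unchanged. First I would formalize the converted pair. Let $\alpha$ be the LCP of $v_X$ and $v_Y$ at coordinate $o$ (Lemma~\ref{theo:CP}). Because the LCP is symmetric in its two arguments, applying \DXOR{} in either direction subtracts the same $\alpha$, so the converted values are $\hat{v}_X := \uline{v_X} \diamond v_Y = v_X - \alpha$ and $\hat{v}_Y := \uline{v_Y} \diamond v_X = v_Y - \alpha$.

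Next I would prove two short identities. The first is algebraic cancellation: $\hat{v}_X - \hat{v}_Y = (v_X - \alpha) - (v_Y - \alpha) = v_X - v_Y$, which is immediate. The second is tail-coordinate preservation: by Lemma~\ref{theo:CP} the LCP coordinate satisfies $o \geq \max(v_X.q,\, v_Y.q)$ (otherwise the truncation equality would force $v_X = v_Y$), and $\alpha$ has its tail at $o$. Thus subtracting $\alpha$ from $v_X$ zeros the positions $\geq o$ but leaves digits below $o$ untouched, so whenever $\hat{v}_X \neq 0$ we have $\hat{v}_X.q = v_X.q$, and symmetrically for $Y$. Consequently $\min(\hat{v}_X.q,\, \hat{v}_Y.q) = \min(v_X.q,\, v_Y.q) = \hat{q}$, and the scaling factor $10^{-\hat{q}}$ used inside $S$ is unchanged between the original and the converted pair.

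Combining the two identities gives $S(\hat{v}_X, \hat{v}_Y) = \text{CBL}(\operatorname{abs}(\hat{v}_X - \hat{v}_Y) \cdot 10^{-\hat{q}}) = \text{CBL}(\operatorname{abs}(v_X - v_Y) \cdot 10^{-\hat{q}}) = S(v_X, v_Y)$, so $\mathrm{Loss}(\uline{v_X} \diamond v_Y) = 0$ follows directly.

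The only bookkeeping that requires care --- and the main obstacle I anticipate --- is the degenerate case $v_X.q = o$, in which $\hat{v}_X = 0$ and has no well-defined tail coordinate. I would adopt the convention $q(0) = +\infty$ so that $0$ does not participate in the $\min$. In that situation $v_X$ is exactly a decimal prefix of $v_Y$, hence $v_Y.q \leq v_X.q$, giving $\hat{q} = v_Y.q = \hat{v}_Y.q$, and the CBL equality continues to hold. With this convention the argument is uniform across all inputs, and the claim follows.
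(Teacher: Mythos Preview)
Your proof is correct and follows essentially the same route as the paper: both hinge on the observation that \DXOR{} subtracts the \emph{same} prefix $\alpha$ from $v_X$ and $v_Y$, so the scaled difference inside $S(\cdot,\cdot)$ is unchanged and the loss vanishes. You are actually more thorough than the paper, which silently reuses the original $\hat q=\min(v_X.q,v_Y.q)$ when evaluating $S(\hat v_X,\hat v_Y)$; your tail-coordinate preservation step and the $q(0)=+\infty$ convention explicitly justify that $\min(\hat v_X.q,\hat v_Y.q)=\hat q$, closing a small gap the paper leaves implicit.
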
  

\begin{proof}  
According to the preconditions defined in Section~\ref{sssec:preconditions}, the \DXOR{} operation produces:  
\begin{equation}  
\hat{v}_{X} = \uline{v_{X}} \diamond v_{Y} = v_{X} - \alpha, \quad \hat{v}_{Y} = v_{Y} - \alpha,  
\end{equation}    
where $\alpha$ denotes the shared prefix between $v_{X}$ and $v_{Y}$.  

Let $\hat{q} = \min(v_{X}.q, v_{Y}.q)$. 
Next, we compute the smoothness metric $S(\hat{v}_{X}, \hat{v}_{Y})$,
\begin{align*}  
S(\hat{v}_{X}, \hat{v}_{Y})  
&= \text{CBL}\left(\operatorname{abs}((\hat{v}_{X} \times 10^{-\hat{q}}) - (\hat{v}_{Y} \times 10^{-\hat{q}}))\right) \\  
&= \text{CBL}\left(\operatorname{abs}((v_{X} \times 10^{-\hat{q}} + \gamma) - (v_{Y} \times 10^{-\hat{q}} + \gamma))\right),  
\end{align*}  
where $\gamma = -\alpha \times 10^{-\hat{q}}$.  

Since the $\gamma$ cancels out due to the shared prefix, we have:  
\begin{equation}   
S(\hat{v}_{X}, \hat{v}_{Y}) = \text{CBL}\left(\operatorname{abs}((v_{X} \times 10^{-\hat{q}}) - (v_{Y} \times 10^{-\hat{q}}))\right) = S(v_{X}, v_{Y}).  
\end{equation} 

Thus, the smoothness loss is:  
\begin{equation}    
\text{Loss}(v_{X} \diamond v_{Y}) = S(\hat{v}_{X}, \hat{v}_{Y}) - S(v_{X}, v_{Y}) = 0.  
\end{equation}

This completes the proof.  
\end{proof}  

\subsection{Proof of Fixed Bit Allocation for Unsigned Binary Suffix}
\label{ssec:proof_suffix}

In Section~\ref{ssec:compressor}, we introduce  Lemma~\ref{theo:upper_bound}, which asserts that for any $\beta_i \in \mathbb{Z}$, a fixed allocation of $\bar{\ell}_i$ bits achieves better compression than variable allocation of $\ell_i$.
Specifically, it holds that:
$$\mathbb{E}[(4 + \bar{\ell}_i)] < \mathbb{E}[(6 + \ell_i)].$$
The detailed proof is provided below:

\begin{proof}

We know the condition:
$$
\delta = o_i - q_i \in \mathbb{N}, \quad \operatorname{abs}(\beta_i) \in [10^{\delta-1}, 10^{\delta}),
$$
$$
\ell_i = \lceil \log_2(\operatorname{abs}(\beta_i) + 1) \rceil, \quad \bar{\ell}_i = \lceil \log_2(10^{\delta}) \rceil.
$$

We aim to prove:
$$\mathbb{E}[(\bar{\ell}_i+4)] < \mathbb{E}[(6+\ell_i)] \iff \mathbb{E}[(\bar{\ell}_i - \ell_i - 2)] < 0.$$

We divide the range of $\operatorname{abs}(\beta_i)$ by powers of 2. Let $j \in \mathbb{N}$ be the smallest integer such that $2^j > 10^{\delta-1}$, implying:
$$2^{j-1} \leq \mathbf{10}^{\delta-1} < 2^j < 2^{j+1} < 2^{j+2} < \mathbf{10}^\delta < 2^{j+4}.$$
We now analyze the relationship between $2^{j+3}$ and $10^\delta$, which leads to two cases:

\textbf{Case (1): $2^{j+3} > 10^\delta$}, with probability $\mathbb{P}_1$. In this case, the fixed allocation is $\bar{\ell}_i = \lceil \log_2(10^\delta) \rceil = j + 3$, while the variable allocation $\ell_i$ is given by:
$$\ell_i =  
\begin{cases} 
j & \text{if } \operatorname{abs}(\beta_i) \in [10^{\delta-1},2^j), \\
j+1 & \text{if } \operatorname{abs}(\beta_i) \in [2^j,2^{j+1}), \\
j+2 & \text{if } \operatorname{abs}(\beta_i) \in [2^{j+1},2^{j+2}), \\
j+3 & \text{if } \operatorname{abs}(\beta_i) \in [2^{j+2},10^{\delta}).
\end{cases} $$

To calculate the expectation for this case, we have:
$\mathbb{E}_1 = \mathbb{E}(\bar{\ell}_i+4 \mid 2^{j+3}>10^{\delta}) - \mathbb{E}(\ell_i+6 \mid 2^{j+3}>10^{\delta}) = j+1- \mathbb{E}(\ell_i \mid 2^{j+3}>10^{\delta}).$

Now, let's calculate $\mathbb{E}(\ell_i \mid 2^{j+3} > 10^\delta)$. This can be expressed as:
\begin{align*}
\mathbb{E}(\ell_i \mid 2^{j+3} > 10^{\delta}) 
&= \frac{(j+3)(10^\delta - 2^{j+2})}{10^\delta - 10^{\delta-1}} + \frac{(j+2)(2^{j+2} - 2^{j+1})}{10^\delta - 10^{\delta-1}} \\
&\quad + \frac{(j+1)(2^{j+1} - 2^{j})}{10^\delta - 10^{\delta-1}} + \frac{j(2^{j} - 10^{\delta-1})}{10^\delta - 10^{\delta-1}}.
\end{align*}

This expression can be further simplified as follows:
\begin{align*}
\mathbb{E}(\ell_i \mid 2^{j+3} > 10^{\delta}) 
&= j + \frac{3(10^\delta - 2^{j+2}) + 2(2^{j+2} - 2^{j+1}) + (2^{j+1} - 2^{j})}{10^\delta - 10^{\delta-1}} \\
&= j + \frac{3 \times 10^\delta - 2^{j+2} - 2^{j+1} - 2^{j}}{10^\delta - 10^{\delta-1}} \\
&= j + 3 - \frac{2^{j+2} + 2^{j+1} + 2^{j} - 3 \times 10^{\delta-1}}{10^\delta - 10^{\delta-1}}.
\end{align*}

Next, we approximate the terms as follows:
$$
\mathbb{E}(\ell_i \mid 2^{j+3} > 10^{\delta}) > j + 3 - \frac{2^{j+2} + 2^{j+1} + 2^{j} - 3 \times 2^{j-1}}{2^{j+2} - 2^{j}}.
$$

After simplifying the above expression, we obtain:
$$
\mathbb{E}(\ell_i \mid 2^{j+3} > 10^\delta) > j + \frac{7}{6}.
$$

Thus, the expectation simplifies to:
$$
\mathbb{E}_1 = j + 1 - \mathbb{E}(\ell_i \mid 2^{j+3} > 10^\delta) < -\frac{1}{6}.
$$


We now calculate the probability $\mathbb{P}_1$ for case (1):
$$\mathbb{P}_1\{2^{j+3} > 10^\delta\} = \mathbb{P}\{(j+3)\log_{10}2 > \delta\}.$$

According to the condition $2^j > 10^{\delta-1}$, we have:
$$j\log_{10}2+1 > \delta.$$

Similarly, from $2^{j-1} \leq 10^{\delta-1}$, this implies:
$$(j-1)\log_{10}2+1 \leq \delta.$$

Therefore, the value of $\delta$ lies within the range: 
$$\delta\in\big[(j-1)\log_{10}2+1,j\log_{10}2+1\big).$$

In summary, we can compute the possibility as follows:
\begin{align*}
& \mathbb{P}\{(j+3)\log_{10}2 > \delta\} 
 = \frac{(j+3)\log_{10}2 - (j-1)\log_{10}2-1}{\log_{10}2} \\
& \quad = \frac{4\log_{10}2-1}{\log_{10}2} 
\implies \mathbb{P}_1\{2^{j+3} > 10^\delta\} = 4 - \log_210 \approx 0.6781.
\end{align*}


\textbf{Case (2): $2^{j+3} \leq 10^\delta$}, with the probability given by:
$$\mathbb{P}_2 \{2^{j+3}\leq10^{\delta}\} = 1 -\mathbb{P}_1 \approx0.3219.$$ 

It is straightforward to see that for $ j \geq 0 $, $2^{j+3} \neq 10^\delta $, and thus we can replace the $2^{j+3} \leq 10^\delta $ with a strict inequality $2^{j+3} < 10^\delta $.

In this case, the fixed allocation is $\bar{\ell}_i = \lceil \log_2(10^\delta) \rceil = j + 4$, while the variable allocation $\ell_i$ is given by:
$$\ell_i =  
\begin{cases} 
j & \text{if } \operatorname{abs}(\beta_i) \in [10^{\delta-1},2^j), \\
j+1 & \text{if } \operatorname{abs}(\beta_i) \in [2^j,2^{j+1}), \\
j+2 & \text{if } \operatorname{abs}(\beta_i) \in [2^{j+1},2^{j+2}), \\
j+3 & \text{if } \operatorname{abs}(\beta_i) \in [2^{j+2},2^{j+3}), \\
j+4 & \text{if } \operatorname{abs}(\beta_i) \in [2^{j+3},10^{\delta}).
\end{cases} $$

We first compute the expectation as follows:
$\mathbb{E}_2 = \mathbb{E}(\bar{\ell}_i+4 \mid 2^{j+3}<10^{\delta}) - \mathbb{E}(\ell_i+6 \mid 2^{j+3}<10^{\delta}) = j+2- \mathbb{E}(\ell_i \mid 2^{j+3}<10^{\delta})$.

Next, we proceed to calculate the expectation $\mathbb{E}(\ell_i \mid 2^{j+3} < 10^{\delta})$ as follows:
\begin{align*}
\mathbb{E}(\ell_i \mid 2^{j+3} < 10^{\delta}) 
&= j + \frac{4 \times 10^\delta - 2^{j+3} - 2^{j+2} - 2^{j+1} - 2^{j}}{10^\delta - 10^{\delta-1}} \\
&= j + 4 - \frac{2^{j+3} + 2^{j+2} + 2^{j+1} + 2^{j} - 4 \times 10^{\delta-1}}{10^\delta - 10^{\delta-1}}.
\end{align*}

After approximating the terms, we obtain:
$$\mathbb{E}(\ell_i \mid 2^{j+3} < 10^{\delta}) > j + 4 - \frac{2^{j+3} + 2^{j+2} + 2^{j+1} + 2^{j} - 4 \times 2^{j-1}}{2^{j+3} - 2^{j}} = j + \frac{15}{7}.$$

Thus, the expectation simplifies to:
$$\mathbb{E}_2  = j+2 - \mathbb{E}(\ell_i|2^{j+3}<10^{\delta}) < -\frac{1}{7}.$$

Finally, we compute the overall expectation:
$$\mathbb{E}[(\bar{\ell}_i - \ell_i - 2)] = \mathbb{E}_1 \times \mathbb{P}_1 + \mathbb{E}_2 \times \mathbb{P}_2< -\frac{1}{6} \times \mathbb{P}_1 - \frac{1}{7} \times \mathbb{P}_2 \approx -0.159 < 0.$$
\end{proof}

\end{document}